\newtheorem{theorem}{Theorem}
\newtheorem{corollary}[theorem]{Corollary}
\newtheorem{fact}[theorem]{Fact}
\newtheorem{lemma}[theorem]{Lemma}
\newcommand{\comment}[1]{}
\newcommand\bs[1]{\boldsymbol{#1}}
\begin{document}
\title{A lower bound for metric $1$-median selection \footnote{A
preliminary version of this paper appears in {\em Proceedings of the 30th
Workshop on Combinatorial Mathematics and Computation Theory}, Hualien,
Taiwan, April 2013, pp.~65--68.}}

\author{
Ching-Lueh Chang \footnote{Department of Computer Science and
Engineering, Yuan Ze University, Taoyuan, Taiwan. Email:
clchang@saturn.yzu.edu.tw}
\footnote{Supported in part by the National Science Council
of Taiwan under
grant
101-2221-E-155-015-MY2.}
}


\maketitle

\begin{abstract}
Consider the problem of finding
a point in
an $n$-point
metric space
with the minimum average distance to
all
points.
We show
that
this problem
has no deterministic
$o(n^2)$-query $(4-\Omega(1))$-approximation algorithms.
\end{abstract}

\section{Introduction}

Given oracle access to
a
metric
space $(\{1,2,\ldots,n\},d)$,
the {\sc metric $1$-median}
problem asks for a point with the minimum average distance to all points.
Indyk~\cite{Ind99, Ind00} shows that {\sc metric $1$-median} has a
Monte-Carlo $O(n/\epsilon^2)$-time $(1+\epsilon)$-approximation algorithm
with an $\Omega(1)$ probability of success.
The more general {\sc metric $k$-median} problem asks for
$x_1$,
$x_2$, $\ldots$, $x_k\in\{1,2,\ldots,n\}$ minimizing
$\sum_{x\in\{1,2,\ldots,n\}}\,\min_{i=1}^k\,d(x_i,x)$.
Randomized
as well as
evasive
algorithms are well-studied for
{\sc metric $k$-median} and the related $k$-means problem~\cite{GMMMO03,
MP04, AGKMMP04, Che09, KSS10, JKS12},
where $k\ge 1$ is part of the input rather than a constant.

This paper focuses on {\em deterministic sublinear-query} algorithms for
{\sc metric $1$-median}.
Guha et al.~\cite[Sec.~3.1--3.2]{GMMMO03} prove
that {\sc metric $k$-median} has
a deterministic
$O(n^{1+\epsilon})$-time
$O(n^\epsilon)$-space
$2^{O(1/\epsilon)}$-approximation algorithm
that reads distances in a single pass, where $\epsilon>0$.
Chang~\cite{Cha13} presents a deterministic
nonadaptive
$O(n^{1.5})$-time $4$-approximation algorithm for {\sc metric $1$-median}.
Wu~\cite{Wu14} generalizes Chang's result by showing an
$O(n^{1+1/h})$-time
$2h$-approximation
algorithm for
any integer $h\ge 2$.
On the negative side,
Chang~\cite{Cha12}
shows that {\sc metric $1$-median}
has no
deterministic
$o(n^2)$-query
$(3-\epsilon)$-approximation algorithms for any constant
$\epsilon>0$~\cite{Cha12}.
This
paper
improves upon his result by showing
that {\sc metric $1$-median} has no deterministic
$o(n^2)$-query $(4-\epsilon)$-approximation algorithms for any
constant $\epsilon>0$.

In social network analysis,
the importance of an actor in a network
may
be
quantified by
several
centrality measures, among which the closeness centrality of an actor is
defined to be
its average distance to other actors~\cite{WF94}.
So
{\sc metric $1$-median}
can
be interpreted as the problem of finding the
most important point in a metric space.
Goldreich and Ron~\cite{GR08} and Eppstein and Wang~\cite{EW04} present
randomized algorithms for approximating the closeness centralities of
vertices in undirected graphs.

\section{Definitions}\label{definitionssection}

For $n\in\mathbb{N}$,
denote
$[n]\equiv \{1,2,\ldots,n\}$.
Trivially, $[0]=\emptyset$.
An $n$-point metric space $([n],d)$ is the set $[n]$, called the groundset,
endowed with a function
$d\colon [n]\times[n]\to\mathbb{R}$ satisfying
\begin{enumerate}[(1)]
\item\label{nonnegative}
$d(x,y)\ge 0$ (non-negativeness),
\item
$d(x,y)=0$ if and only if $x=y$ (identity of indiscernibles),
\item\label{symmetry}
$d(x,y)=d(y,x)$ (symmetry), and
\item
$d(x,y)+d(x,z)\ge d(y,z)$ (triangle inequality)
\end{enumerate}
for all $x$, $y$, $z\in [n]$.
An
equivalent definition
requires the triangle inequality only for distinct
$x$, $y$, $z\in [n]$, axioms~(\ref{nonnegative})--(\ref{symmetry}) remaining.

An algorithm with oracle access to a metric space $([n],d)$
is given $n$ and may query
$d$ on
any
$(x,y)\in[n]\times[n]$ to obtain $d(x,y)$.
Without loss of generality,
we
forbid
queries for $d(x,x)$, which trivially return $0$,
as well as repeated queries, where
a query
for $d(x,y)$
is considered to repeat that for
$d(y,x)$.
For convenience,
denote
an algorithm ALG with oracle access to $([n],d)$
by $\text{ALG}^d$.

Given oracle access to a finite metric space $([n],d)$, the
{\sc metric $1$-median} problem asks for a point
in
$[n]$
with the minimum average distance to
all
points.
An algorithm for
this problem is
$\alpha$-approximate if
it outputs
a point $x\in[n]$
satisfying
$$\sum_{y\in[n]}\,d\left(x,y\right)
\le\alpha\,\min_{x^\prime\in[n]}\,\sum_{y\in[n]}\,d\left(x^\prime,y\right),$$
where
$\alpha\ge 1$.

The following theorem is due to Chang~\cite{Cha13} and generalized by
Wu~\cite{Wu14}.

\begin{theorem}[{\cite{Cha13, Wu14}}]\label{nonadaptiveupperbound}
{\sc Metric $1$-median} has a deterministic nonadaptive $O(n^{1.5})$-time
$4$-approximation algorithm.
\end{theorem}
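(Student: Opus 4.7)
The plan is to exhibit a deterministic nonadaptive algorithm that queries $O(n^{1.5})$ distances and whose output is a $4$-approximation. The natural pivot-based design is the following: fix in advance a set $S \subseteq [n]$ of size $\lceil \sqrt{n}\, \rceil$ (for instance, $S = [\lceil \sqrt{n}\, \rceil]$) and query $d(s, y)$ for every pair $(s, y) \in S \times [n]$. This commits to $|S|\cdot n = O(n^{1.5})$ queries up front (nonadaptively), and all subsequent bookkeeping runs in $O(n^{1.5})$ time.

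From the answers, one reads off the exact cost $C(s) := \sum_{y\in[n]} d(s, y)$ for every $s \in S$, and for each $y \in [n]$ identifies the nearest pivot $r(y) := \argmin_{s \in S} d(s, y)$ together with $d(r(y), y)$. Define the computable surrogate
\[
    \Phi(x) := \sum_{y \in [n]} \bigl[d(x, r(y)) + d(r(y), y)\bigr], \qquad x \in [n],
\]
which upper-bounds $\sum_y d(x, y)$ by the triangle inequality, since every term on the right was queried (because $r(y) \in S$). The algorithm then returns whichever of the best pivot $s^\star := \argmin_{s \in S} C(s)$ or the surrogate minimizer $\hat{x} := \argmin_{x \in [n]} \Phi(x)$ has the smaller guaranteed cost.

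The main analytic step is to control $\Phi(x^\star)$ at the true median $x^\star$. The triangle inequality $d(x^\star, r(y)) \le d(x^\star, y) + d(y, r(y))$ yields
\[
    \Phi(x^\star) \;\le\; \mathrm{OPT} + 2\sum_{y\in[n]} d(y, r(y)) \;\le\; \mathrm{OPT} + 2\,C(s^\star),
\]
where the last step uses $\sum_y d(y, r(y)) = \sum_y \min_{s\in S} d(s, y) \le \min_{s\in S} \sum_y d(s, y) = C(s^\star)$. Since $\hat{x}$ minimizes $\Phi$ and $\Phi(\cdot)$ dominates the true cost, the algorithm's output cost is at most $\min\bigl(C(s^\star),\, \mathrm{OPT} + 2\,C(s^\star)\bigr)$.

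The principal obstacle is that this naive minimum collapses to $C(s^\star)$, and for a nonadaptively fixed $S$ one can construct adversarial metrics forcing $C(s^\star) \gg \mathrm{OPT}$. Closing the gap---via a sharper output rule, a tighter surrogate aggregating information from several pivots per point, or an enriched query pattern inside $[n]\setminus S$---is the core technical contribution of \cite{Cha13} and lets the analysis pin the approximation ratio down to exactly $4$. Wu's generalization \cite{Wu14} follows the same blueprint while chaining the triangle inequality through $h$ levels of pivots, trading the exponent $1/2$ for $1/h$ and the factor $4$ for $2h$.
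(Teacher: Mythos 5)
This theorem is cited from \cite{Cha13, Wu14} and is not proved in the paper, so there is no in-paper proof to compare against; the correct comparison is with Chang's actual construction. Your proposal is not a proof: you yourself note that the bound you derive, $\min\bigl(C(s^\star),\, \mathrm{OPT} + 2\,C(s^\star)\bigr) = C(s^\star)$, is vacuous, and then defer ``closing the gap'' to \cite{Cha13}. That deferred step is the entire content of the theorem.

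Beyond incompleteness, the query pattern you propose is qualitatively different from Chang's and, as you set it up, cannot be rescued by a ``sharper output rule.'' Your scheme queries $S\times[n]$ for a single pivot set $S$ of size $\lceil\sqrt{n}\rceil$ and routes every estimate through $r(y)\in S$, the pivot nearest to the \emph{target} $y$; the waypoint depends only on $y$. This yields the error term $2\sum_y d(y,r(y))$, which an adversary can make $\Theta(n)\cdot\mathrm{OPT}$ when no pivot happens to land near the true median. Chang's algorithm instead writes each point as $qm+r$ with $m\approx\sqrt{n}$ and, for a pair $x=qm+r$, $y=q'm+r'$, queries the two legs $d(qm+r,\,q'm+r)$ and $d(q'm+r,\,q'm+r')$ so that the intermediate point $q'm+r$ shares a coordinate with \emph{both} endpoints. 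The resulting surrogate $\tilde d(x,y)$ is an upper bound on $d(x,y)$ defined for \emph{every} pair, with total query count $O(n^2/m + nm)=O(n^{1.5})$, and the key lemma (Lemma~2 of \cite{Cha13}) bounds $\tilde d(x,y)$ by $d(z,x)+d(z,y)$ plus a controlled cross term for an arbitrary reference $z$; averaging over $(x,y)$ and taking $z=x^\star$ is what produces the factor $4$. Your pivot scheme lacks this pair-dependent routing, so no choice of aggregation over $S$ recovers the argument. If you want to repair the proposal, replace the $S\times[n]$ query set with the block/grid pattern and carry out the averaging bound rather than the per-point surrogate $\Phi$.
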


\section{Lower bound}

Fix arbitrarily a
deterministic
$o(n^2)$-query algorithm
$A$
for {\sc metric $1$-median}
and a constant
$\delta\in(0,0.1)$.
By padding queries, we may assume the existence of a function
$q\colon\mathbb{Z}^+\to\mathbb{Z}^+$ such that $A$ makes exactly
$q(n)=o(n^2)$ queries given oracle access to any metric space with groundset
$[n]$.

We introduce some notations concerning a function
$d\colon[n]\times[n]\to\mathbb{R}$
to be determined later.
For $i\in[q(n)]$, denote the $i$th query of $A^d$ by
$(x_i,y_i)\in[n]\times[n]$; in other
words, the $i$th query of $A^d$ asks for $d(x_i,y_i)$.
Note that $(x_i,y_i)$ depends only on $d(x_1,y_1)$, $d(x_2,y_2)$,
$\ldots$, $d(x_{i-1},y_{i-1})$ because $A$
is deterministic and
has been fixed.
For $x\in[n]$ and $i\in\{0,1,\ldots,q(n)\}$,
\begin{eqnarray}
N_i(x)
&\stackrel{\text{def.}}{=}&
\left\{
y\in[n]\mid
\left\{
\left(x,y\right), \left(y,x\right)
\right\}
\cap \left\{\left(x_j,y_j\right)\mid
j\in\left[i\right]\right\}
\neq\emptyset
\right\},
\label{neighborhoodinsubgraph}\\
\alpha_i(x)
&\stackrel{\text{def.}}{=}&
\left|\,
N_i(x)
\,\right|,
\label{numberoffrozenincidentdistances}
\end{eqnarray}
following Chang~\cite{Cha12} with a slight change in notation.
Equivalently, $\alpha_i(x)$ is the degree of $x$ in the undirected graph
with vertex set $[n]$ and edge set $\{(x_j,y_j)\mid j\in[i]\}$.
As $[0]=\emptyset$,
$\alpha_0(x)=0$ for $x\in[n]$.
Note that
$\alpha_i(\cdot)$ depends only on
$(x_1,y_1)$, $(x_2,y_2)$, $\ldots$, $(x_i,y_i)$.
Denote the
output of $A^d$
by $p$.
By adding at most $n-1=o(n^2)$ dummy queries,
we may
assume without loss of generality that
\begin{eqnarray}
\left(p,y\right)\in\left\{\left(x_i,y_i\right)\mid i\in\left[q(n)\right]\right\}
\label{algorithmoutputheavilyqueried}
\end{eqnarray}
for all $y\in[n]\setminus\{p\}$.
Consequently,
\begin{eqnarray}
\alpha_{q(n)}(p)=n-1.\label{outputallasked}
\end{eqnarray}
Fix any set $S\subseteq [n]$ of size $\lceil\delta n\rceil$,
e.g., $S=[\lceil\delta n\rceil]$.

We proceed to
construct $d$ by gradually freezing distances.
For brevity, freezing
the value of
$d(x,y)$
implicitly freezes $d(y,x)$ to the same value, where $x$, $y\in[n]$.
Inductively, having
answered the first $i-1$ queries of $A^d$ by freezing
$d(x_1,y_1)$, $d(x_2,y_2)$, $\ldots$,
$d(x_{i-1},y_{i-1})$, where $i\in[q(n)]$,
answer the $i$th query by
\begin{eqnarray}
d\left(x_i,y_i\right)
&=&\left\{
\begin{array}{ll}
3, &\text{if $x_i$, $y_i\in S$;}\\
3, &\text{if $x_i\in S$, $y_i\notin S$ and $\alpha_{i-1}(x_i)\le\delta n$;}\\
3, &\text{if $y_i\in S$, $x_i\notin S$ and $\alpha_{i-1}(y_i)\le\delta n$;}\\
4, &\text{if $x_i\in S$, $y_i\notin S$ and $\alpha_{i-1}(x_i)>\delta n$;}\\
4, &\text{if $y_i\in S$, $x_i\notin S$ and $\alpha_{i-1}(y_i)>\delta n$;}\\
2, &\text{if $x_i$, $y_i\notin S$ and $\max\{\alpha_{i-1}(x_i),\alpha_{i-1}(y_i)\}\le\delta n$;}\\
4, &\text{if $x_i$, $y_i\notin S$ and $\max\{\alpha_{i-1}(x_i),\alpha_{i-1}(y_i)\}>\delta n$.}
\end{array}
\right.
\label{distanceassignment}
\end{eqnarray}
It is not hard to verify that
the seven cases
in equation~(\ref{distanceassignment})
are
exhaustive and
mutually
exclusive.
We have now
frozen
$d(x_i,y_i)$ for all $i\in[q(n)]$ and none of the other distances.
As repeated queries are forbidden, equation~(\ref{distanceassignment})
does not freeze one distance twice,
preventing inconsistency.

Set
\begin{eqnarray}
B
&\stackrel{\text{def.}}{=}&
\left\{
x\in[n]\mid \alpha_{q(n)}(x)>\delta n
\right\},\label{badpoints}\\
\hat{p}
&\stackrel{\text{def.}}{=}&
\mathop{\rm argmin}_{x\in S}\, \alpha_{q(n)}(x),
\label{trueoptimal}
\end{eqnarray}
breaking ties arbitrarily.
For all distinct
$x$, $y\in[n]$
with $(x,y)$, $(y,x)\notin\{(x_i,y_i)\mid i\in[q(n)]\}$,
let
\begin{eqnarray}
d\left(x,y\right)
=
\left\{
\begin{array}{ll}
1, &\text{if $x=\hat{p}$, $y\notin S\cup B$;}\\
1, &\text{if $y=\hat{p}$, $x\notin S\cup B$;}\\
3, &\text{if $x$, $y\in S\cup B$;}\\
4, &\text{if $x\in (S\cup B)\setminus \{\hat{p}\}$ and $y\notin (S\cup B\cup\{\hat{p}\})$;}\\
4, &\text{if $y\in (S\cup B)\setminus \{\hat{p}\}$ and $x\notin (S\cup B\cup\{\hat{p}\})$;}\\
2, &\text{otherwise.}
\end{array}
\right.
\label{completingthemetric}
\end{eqnarray}
Clearly, the six cases in equation~(\ref{completingthemetric}) are
exhaustive and
mutually
exclusive.
Furthermore, equation~(\ref{completingthemetric}) assigns the same value
to $d(x,y)$ and $d(y,x)$.
Finally, for all $x\in[n]$,
\begin{eqnarray}
d\left(x,x\right)=0.\label{trivialdistance}
\end{eqnarray}
Equations~(\ref{distanceassignment}),~(\ref{completingthemetric})~and~(\ref{trivialdistance})
complete the construction of $d$ by freezing all distances.

The following lemma is straightforward.

\begin{lemma}\label{distancesarezeroto4}
For all distinct $x$, $y\in[n]$,
$d(x,y)\in\{1,2,3,4\}$.
\end{lemma}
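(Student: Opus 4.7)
The plan is essentially a case analysis inspection of the two equations that assigned values to $d$. For distinct $x,y\in[n]$, exactly one of two situations holds: either $(x,y)$ or $(y,x)$ appears in $\{(x_i,y_i)\mid i\in[q(n)]\}$, in which case $d(x,y)$ was frozen by equation~(\ref{distanceassignment}); or neither appears, in which case $d(x,y)$ was set by equation~(\ref{completingthemetric}). Equation~(\ref{trivialdistance}) only applies when $x=y$, so it is irrelevant here.

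First I would note that equation~(\ref{distanceassignment}) explicitly produces values from the set $\{2,3,4\}$ in each of its seven cases, so this part of the construction contributes only legal values. Next I would observe that equation~(\ref{completingthemetric}) explicitly produces values from the set $\{1,2,3,4\}$ in each of its six cases. Combining the two, $d(x,y)\in\{1,2,3,4\}$ whenever $x\neq y$.

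The only mild subtlety is ensuring that the two equations together cover every distinct pair exactly once and never assign a value outside $\{1,2,3,4\}$. As noted in the text immediately following equations~(\ref{distanceassignment}) and~(\ref{completingthemetric}), the seven cases of~(\ref{distanceassignment}) are exhaustive and mutually exclusive over the queried pairs, and the six cases of~(\ref{completingthemetric}) are exhaustive and mutually exclusive over the unqueried distinct pairs. Thus every distinct pair is assigned exactly one value, and that value is drawn from $\{1,2,3,4\}$.

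There is no genuine obstacle in this lemma; it is a direct reading of the two defining equations. I would present the proof as a single short paragraph stating that the right-hand sides of equations~(\ref{distanceassignment}) and~(\ref{completingthemetric}) take values only in $\{1,2,3,4\}$, and that together these equations define $d(x,y)$ for every distinct pair $(x,y)$.
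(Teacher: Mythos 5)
Your proof is correct and takes the same approach the paper implicitly considers: the paper simply states the lemma as ``straightforward'' with no proof, and your argument is exactly the direct inspection of equations~(\ref{distanceassignment}) and~(\ref{completingthemetric}) that makes it so.
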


Below is an immediate consequence of equation~(\ref{trueoptimal}).

\begin{lemma}\label{optimalisinpreservedregion}
$\hat{p}\in S$.
\end{lemma}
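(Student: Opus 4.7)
The plan is essentially to unfold the definition. By equation~(\ref{trueoptimal}), $\hat{p}$ is defined as $\mathop{\rm argmin}_{x\in S} \alpha_{q(n)}(x)$, with ties broken arbitrarily. Since an argmin over a set is by definition an element of that set, the conclusion $\hat{p}\in S$ follows immediately, provided the argmin is well-defined.

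The only thing worth checking is that $S$ is nonempty, so that the argmin is over a nonempty finite set and hence attained. This holds because $|S|=\lceil\delta n\rceil\ge 1$ for all sufficiently large $n$ (recall $\delta\in(0,0.1)$ is a fixed positive constant), and the lower bound we are proving is an asymptotic statement. Alternatively, one can simply note that the explicit choice $S=[\lceil\delta n\rceil]$ given after the definition of $S$ is nonempty for every $n\ge 1$ with $\delta>0$.

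There is no real obstacle here; the lemma is recorded only for later reference, to make it easy to cite that $\hat{p}$ lies in the ``preserved region'' $S$ whenever the downstream arguments need this membership (for instance, when applying the rules in equations~(\ref{distanceassignment}) and~(\ref{completingthemetric}) to bound the sum $\sum_{y\in[n]} d(\hat{p},y)$). Thus a one-line proof suffices: \emph{Immediate from equation~(\ref{trueoptimal}).}
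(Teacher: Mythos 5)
Your proof is correct and matches the paper's: the paper simply remarks that the lemma is ``an immediate consequence of equation~(\ref{trueoptimal})'', i.e., $\hat p$ is an argmin over $S$ and hence lies in $S$. Your extra remark on the nonemptiness of $S$ (so the argmin is well-defined) is a harmless and reasonable addition.
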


The following lemma is a consequence of
equations~(\ref{neighborhoodinsubgraph})--(\ref{numberoffrozenincidentdistances}) and
our
forbidding
repeated queries.

\begin{lemma}\label{monotonicity}
For all $x\in[n]$ and $i\in[q(n)]$,
\begin{eqnarray*}
\alpha_i(x)-\alpha_{i-1}(x)
=\left\{
\begin{array}{ll}
0, &\text{if $x\notin \{x_i,y_i\}$;}\\
1, &\text{otherwise.}
\end{array}
\right.
\end{eqnarray*}
\end{lemma}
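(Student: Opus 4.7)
The plan is to prove the two cases directly from the definitions in equations~(\ref{neighborhoodinsubgraph})--(\ref{numberoffrozenincidentdistances}), using the ban on repeated queries exactly once to rule out double-counting.

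First I would observe the easy monotonicity inclusion $N_{i-1}(x)\subseteq N_i(x)$: the defining condition in equation~(\ref{neighborhoodinsubgraph}) is phrased as nonempty intersection with $\{(x_j,y_j)\mid j\in[i-1]\}$, and enlarging this set to $\{(x_j,y_j)\mid j\in[i]\}$ can only add elements. Hence $\alpha_i(x)-\alpha_{i-1}(x)\ge 0$, and it suffices to control the set $N_i(x)\setminus N_{i-1}(x)$.

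Next I would handle the case $x\notin\{x_i,y_i\}$. For any $y\in[n]$, membership of $y$ in $N_i(x)\setminus N_{i-1}(x)$ would force some pair in $\{(x,y),(y,x)\}$ to equal $(x_i,y_i)$; but both coordinates of that pair involve $x$, while $x$ is not a coordinate of $(x_i,y_i)$. Therefore $N_i(x)=N_{i-1}(x)$ and the difference is $0$.

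For the case $x\in\{x_i,y_i\}$, by the symmetry of equation~(\ref{neighborhoodinsubgraph}) in $(x_i,y_i)$ I may assume without loss of generality that $x=x_i$, and let $z=y_i$. Clearly $z\in N_i(x)$ since $(x,z)=(x_i,y_i)$ itself witnesses the intersection. I claim $z\notin N_{i-1}(x)$: otherwise some $j\in[i-1]$ would satisfy $(x_j,y_j)\in\{(x,z),(z,x)\}=\{(x_i,y_i),(y_i,x_i)\}$, which is precisely the situation the excerpt forbids when it declares a query for $d(x_i,y_i)$ to repeat one for $d(y_i,x_i)$. Finally, any $y\in N_i(x)\setminus N_{i-1}(x)$ must be witnessed by the one new pair $(x_i,y_i)$, forcing $y=z$. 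Hence $N_i(x)\setminus N_{i-1}(x)=\{z\}$ and the difference is exactly $1$.

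The only mildly delicate step is the appeal to the no-repeat rule to establish $z\notin N_{i-1}(x)$; everything else is bookkeeping on the defining set in equation~(\ref{neighborhoodinsubgraph}). I expect no substantive obstacle.
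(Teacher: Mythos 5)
Your proof is correct and follows essentially the same route as the paper's: split on whether $x\in\{x_i,y_i\}$, observe that the only possible new neighbor is $y_i$ (after WLOG taking $x=x_i$), and invoke the no-repeated-queries convention exactly once to conclude $y_i\notin N_{i-1}(x)$. The paper states the first case as immediate and compresses the second into the identity $N_i(x)=N_{i-1}(x)\cup\{y_i\}$ plus the non-membership fact, but the underlying argument is the same.
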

\begin{proof}
The case of $x\notin \{x_i,y_i\}$ is immediate from
equations~(\ref{neighborhoodinsubgraph})--(\ref{numberoffrozenincidentdistances}).
Suppose that $x\in \{x_i,y_i\}$.
By symmetry,
we may assume
$x=x_i$.
So by equation~(\ref{neighborhoodinsubgraph}),
\begin{eqnarray}
N_i(x)=N_{i-1}(x)\cup\left\{y_i\right\}.\label{newneighbor}
\end{eqnarray}
As
$(x,y_i)=(x_i,y_i)$ is the $i$th query
and we
forbid
repeated
queries,
\begin{eqnarray}
y_i\notin N_{i-1}(x)\label{reallynewneighbor}
\end{eqnarray}
by equation~(\ref{neighborhoodinsubgraph}).\footnote{In detail,
if $y_i\in N_{i-1}(x)$, then
$(x_j,y_j)\in\{(x,y_i),(y_i,x)\}$
for some $j\in[i-1]$ by
equation~(\ref{neighborhoodinsubgraph}); hence the $i$th query $(x_i,y_i)
=(x,y_i)$
repeats the $j$th query, a contradiction.}
Equations~(\ref{numberoffrozenincidentdistances})~and~(\ref{newneighbor})--(\ref{reallynewneighbor})
complete the
proof.
\end{proof}

In short, Lemma~\ref{monotonicity} says that adding the edge $(x_i,y_i)$
to an undirected graph without that edge increases the degree of $x$
by $1$ if and only if $x\in\{x_i,y_i\}$.



\begin{lemma}\label{monotonicitysame}
For all $x\in[n]$ and $i\in[q(n)+1]$,
if $\alpha_{i-1}(x)>\delta n$, then
$x\in B$.
\end{lemma}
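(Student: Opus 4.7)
The plan is to observe that the lemma follows immediately from the monotonicity of $\alpha_i(x)$ in $i$, which is itself a direct consequence of Lemma~\ref{monotonicity}.

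First, I would note that for $i\in[q(n)+1]$, we have $i-1\in\{0,1,\ldots,q(n)\}$, so $i-1\le q(n)$. By Lemma~\ref{monotonicity}, each single-step increment $\alpha_j(x)-\alpha_{j-1}(x)$ lies in $\{0,1\}$ for $j\in[q(n)]$, so the sequence $\alpha_0(x),\alpha_1(x),\ldots,\alpha_{q(n)}(x)$ is non-decreasing. Iterating from index $i-1$ up to $q(n)$ yields
\[
\alpha_{q(n)}(x)\ge\alpha_{i-1}(x).
\]

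Assuming $\alpha_{i-1}(x)>\delta n$, we therefore obtain $\alpha_{q(n)}(x)>\delta n$, which by the definition of $B$ in equation~(\ref{badpoints}) gives $x\in B$, completing the proof.

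There is no real obstacle here: the lemma is essentially a restatement of monotonicity combined with the defining inequality of $B$. The only minor bookkeeping step is handling the edge case $i=1$ (where $i-1=0$) and $i=q(n)+1$ (where $i-1=q(n)$), both of which fit the same inequality chain without modification.
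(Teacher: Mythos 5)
Your proof is correct and follows exactly the same route as the paper: use Lemma~\ref{monotonicity} to conclude that $\alpha_{q(n)}(x)\ge\alpha_{i-1}(x)$, then apply the definition of $B$ in equation~(\ref{badpoints}). The extra remarks about edge cases are unnecessary but harmless.
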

\begin{proof}
By Lemma~\ref{monotonicity}, $\alpha_{q(n)}(x)\ge \alpha_{i-1}(x)$.
Invoking equation~(\ref{badpoints}) then completes the proof.
\end{proof}

\begin{lemma}\label{sumofdegrees}
$$\sum_{x\in[n]}\, \alpha_{q(n)}(x)= 2\, q(n).$$
\end{lemma}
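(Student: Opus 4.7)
The plan is to prove this by straightforward induction on $i \in \{0,1,\ldots,q(n)\}$, establishing the stronger statement $\sum_{x\in[n]} \alpha_i(x) = 2i$ and then specializing to $i=q(n)$. Conceptually, this is just the handshake lemma: since $\alpha_i(x)$ is the degree of $x$ in the undirected graph with edge set $\{(x_j,y_j)\mid j\in[i]\}$, which has exactly $i$ edges (because repeated queries are forbidden, the $i$ edges are distinct), the degree sum must equal $2i$.

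For the base case $i=0$, I would simply cite the observation following equation~(\ref{numberoffrozenincidentdistances}) that $\alpha_0(x)=0$ for every $x\in[n]$, using $[0]=\emptyset$. For the inductive step, I would apply Lemma~\ref{monotonicity} term-by-term: summing the equation $\alpha_i(x)-\alpha_{i-1}(x) = \mathbf{1}[x\in\{x_i,y_i\}]$ over all $x\in[n]$ shows that the degree sum increases by exactly $|\{x_i,y_i\}|$ from step $i-1$ to step $i$.

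The only thing to verify is that $x_i\neq y_i$ so that $|\{x_i,y_i\}|=2$, and this follows from the paper's convention (stated in the definitions section) that queries for $d(x,x)$ are forbidden. Once that is noted, the inductive step gives $\sum_x \alpha_i(x)=\sum_x \alpha_{i-1}(x)+2$, and iterating from $i=0$ to $i=q(n)$ yields $\sum_x \alpha_{q(n)}(x)=2q(n)$. I do not anticipate any real obstacle here; the statement is a direct bookkeeping consequence of Lemma~\ref{monotonicity}.
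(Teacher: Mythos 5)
Your proof is correct and rests on the same underlying fact as the paper's: the quantity $\sum_{x\in[n]}\alpha_{q(n)}(x)$ is the degree sum of an undirected graph on $[n]$ with exactly $q(n)$ edges, so it equals $2q(n)$ by the handshake lemma. The only difference is presentational: the paper simply observes that the forbidding of repeated queries makes $\lvert\{(x_i,y_i)\mid i\in[q(n)]\}\rvert=q(n)$ and then cites the handshake lemma as a known fact, whereas you re-derive that fact in situ by summing Lemma~\ref{monotonicity} over $x\in[n]$ and inducting on $i$. Your route is slightly more self-contained (it makes explicit that $x_i\neq y_i$, which is needed to get the increment of $2$ at each step), but it is not a genuinely different argument; both proofs are the handshake lemma, one cited and one unfolded.
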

\begin{proof}
Recall that the left-hand side
is the sum of degrees in the undirected graph with vertex set $[n]$
and edge set $\{(x_i,y_i)\mid i\in[q(n)]\}$.
As we
forbid
repeated queries, $\left|\,\{(x_i,y_i)\mid
i\in[q(n)]\}\,\right|=q(n)$
Finally, it is a basic fact in graph
theory that
the sum of degrees in an undirected graph equals
twice the number of edges.
\end{proof}

\begin{lemma}[{Implicit in~\cite[Lemma~13]{Cha12}}]\label{fewbadpoints}
$|B|=o(n)$.
\end{lemma}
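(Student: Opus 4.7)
}
The plan is to apply a Markov-type averaging argument to the sum of degrees computed in Lemma~\ref{sumofdegrees}. By the definition of $B$ in equation~(\ref{badpoints}), every $x\in B$ satisfies $\alpha_{q(n)}(x)>\delta n$. Therefore, restricting the sum in Lemma~\ref{sumofdegrees} to $B$ and using non-negativity of $\alpha_{q(n)}(\cdot)$ on $[n]\setminus B$, I would write
$$
2\,q(n)\;=\;\sum_{x\in[n]}\alpha_{q(n)}(x)\;\ge\;\sum_{x\in B}\alpha_{q(n)}(x)\;>\;|B|\cdot\delta n.
$$
Rearranging gives $|B|<2q(n)/(\delta n)$.

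The conclusion $|B|=o(n)$ then follows immediately from the hypothesis that $q(n)=o(n^2)$ and the fact that $\delta\in(0,0.1)$ is a fixed constant, so $2q(n)/(\delta n)=o(n)$.

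There is essentially no obstacle here: the bound $\sum_{x\in[n]}\alpha_{q(n)}(x)=2q(n)$ was just proved, and the degree threshold defining $B$ is a constant fraction of $n$, so a one-line counting argument suffices. The only thing to be careful about is to use a strict inequality $\alpha_{q(n)}(x)>\delta n$ as in equation~(\ref{badpoints}), which is consistent with the strict inequality in the displayed chain above.
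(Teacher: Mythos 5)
Your proposal is correct and follows essentially the same averaging argument as the paper: both bound $|B|\cdot\delta n$ by $\sum_{x\in B}\alpha_{q(n)}(x)\le\sum_{x\in[n]}\alpha_{q(n)}(x)=2q(n)$ (via Lemma~\ref{sumofdegrees}) and conclude $|B|=o(n)$ from $q(n)=o(n^2)$ and $\delta$ constant. The only cosmetic difference is that you use the strict inequality from equation~(\ref{badpoints}) whereas the paper writes the weaker $\le$; the conclusion is unaffected.
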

\begin{proof}
We have
$$
|B|\,\delta n
\stackrel{\text{equation~(\ref{badpoints})}}{\le} \sum_{x\in B} \alpha_{q(n)}(x)
\le \sum_{x\in [n]} \alpha_{q(n)}(x)
\stackrel{\text{Lemma~\ref{sumofdegrees}}}{=}
2\,q(n).
$$
This gives $|B|=o(n)$ as $\delta\in(0,0.1)$ is a constant and $q(n)=o(n^2)$.
\end{proof}

\begin{lemma}\label{sparselyaskedpoint}
For all sufficiently large $n$ and all $i\in[q(n)+1]$,
\begin{eqnarray}
\alpha_{i-1}\left(\hat{p}\right)&\le& \delta n.
\label{sparselyaskedpointequation}
\end{eqnarray}
\end{lemma}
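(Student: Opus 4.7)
The plan is a simple averaging argument: since $\hat p$ minimizes $\alpha_{q(n)}$ over the set $S$ of size $\lceil \delta n\rceil$, the handshake-style identity in Lemma~\ref{sumofdegrees} already controls it.

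First, by Lemma~\ref{monotonicity}, $\alpha_{i-1}(\hat p) \le \alpha_{q(n)}(\hat p)$ for every $i\in[q(n)+1]$, so it suffices to bound $\alpha_{q(n)}(\hat p)$. Next, by the definition of $\hat p$ in equation~(\ref{trueoptimal}) as the argmin of $\alpha_{q(n)}(\cdot)$ over $S$,
$$
|S|\cdot \alpha_{q(n)}(\hat p)\;\le\;\sum_{x\in S}\alpha_{q(n)}(x)\;\le\;\sum_{x\in[n]}\alpha_{q(n)}(x)\;=\;2\,q(n),
$$
where the last equality is Lemma~\ref{sumofdegrees}.

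Since $|S|=\lceil \delta n\rceil \ge \delta n$, we get $\alpha_{q(n)}(\hat p)\le 2q(n)/(\delta n)$. Because $q(n)=o(n^2)$ and $\delta\in(0,0.1)$ is a fixed constant, the right-hand side is $o(n)$, and in particular it is at most $\delta n$ once $n$ is large enough.

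There is essentially no obstacle here; the statement is really just the observation that an averaging bound over the set $S$ beats the threshold $\delta n$ whenever the total degree budget $2q(n)$ is $o(n^2)$. The one small thing to keep straight is the reduction from general $i$ to $i=q(n)+1$ via monotonicity, so that the quantity being bounded is the one that Lemma~\ref{sumofdegrees} directly controls.
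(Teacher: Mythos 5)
Your proof is correct, and it is a genuinely different route than the one in the paper. The paper invokes Lemma~\ref{fewbadpoints} ($|B| = o(n)$) to argue that $S\setminus B\neq\emptyset$ once $n$ is large, so some $x\in S$ satisfies $\alpha_{q(n)}(x)\le\delta n$, and then the argmin property of $\hat p$ (equation~(\ref{trueoptimal})) immediately gives $\alpha_{q(n)}(\hat p)\le\delta n$. You instead average $\alpha_{q(n)}$ directly over $S$ using the argmin property and Lemma~\ref{sumofdegrees}, getting $\alpha_{q(n)}(\hat p)\le 2q(n)/(\delta n)=o(n)$, and then observe that this is eventually below $\delta n$. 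Both proofs ultimately rest on the handshake identity and the fact that $q(n)=o(n^2)$; the paper's version reuses the already-established bound on $|B|$ (which it needs anyway for Lemma~\ref{analyzingoptimalpoint}), while yours is a self-contained one-line averaging bound. The monotonicity reduction from $\alpha_{i-1}$ to $\alpha_{q(n)}$ via Lemma~\ref{monotonicity} is handled identically in both. In fact, a commented-out version of the paper's proof uses exactly your averaging argument, so the author evidently considered both and chose the route through $|B|$; either is sound.
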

\begin{proof}
By Lemma~\ref{fewbadpoints}, $|S|=\lceil\delta n\rceil$ and $\delta\in(0,0.1)$
being a constant,
$S\setminus B\neq\emptyset$ for all sufficiently large $n$.
By equation~(\ref{badpoints}), $S\setminus B\neq\emptyset$
$\alpha_{q(n)}(x)\le \delta n$ for some $x\in S$,
which together with equation~(\ref{trueoptimal})
gives $\alpha_{q(n)}(\hat{p})\le\delta n$.
Finally, Lemma~\ref{monotonicity} and $\alpha_{q(n)}(\hat{p})\le\delta n$
imply inequality~(\ref{sparselyaskedpointequation}) for all $i\in[q(n)+1]$.
\comment{
Furthermore,
$$
|S|\cdot\alpha_{q(n)}\left(\hat{p}\right)
\stackrel{\text{equation~(\ref{trueoptimal})}}{\le}
\sum_{x\in S}\, \alpha_{q(n)}(x)
\le
\sum_{x\in[n]}\, \alpha_{q(n)}(x)
\stackrel{\text{Lemma~\ref{sumofdegrees}}}{=}
2\cdot q(n).
$$
So
$\alpha_{q(n)}(\hat{p})\le 2\cdot q(n)/|S|=o(n)$,
which together with
Lemma~\ref{monotonicity}
shows
$\alpha_{i-1}(\hat{p})
\le\alpha_{q(n)}(\hat{p})
\le\delta n$
for all sufficiently large $n$
and all $i\in[q(n)+1]$.
}
\end{proof}

Henceforth, assume $n$ to be sufficiently large to satisfy
inequality~(\ref{sparselyaskedpointequation}) for all $i\in[q(n)+1]$.

\begin{lemma}\label{onlysourceofdistance1}
For all $x$, $y\in[n]$, if $d(x,y)=1$, then
one of the following conditions is true:
\begin{itemize}
\item $x=\hat{p}$ and $y\notin S\cup B$;
\item $y=\hat{p}$ and $x\notin S\cup B$.
\end{itemize}
\end{lemma}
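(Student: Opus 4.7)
The plan is to trace back how $d(x,y)$ was assigned and observe that only two of the construction's cases can produce the value $1$, both of which match the conclusion.

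First I would dispose of the trivial case $x=y$: by equation~(\ref{trivialdistance}), $d(x,x)=0\neq 1$, so the implication holds vacuously. So we may assume $x\neq y$.

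Next I would split on whether the pair $\{x,y\}$ was queried. If $(x,y)$ or $(y,x)$ lies in $\{(x_i,y_i)\mid i\in[q(n)]\}$, then $d(x,y)$ was frozen by equation~(\ref{distanceassignment}); a direct inspection of that equation's seven cases shows every assigned value is in $\{2,3,4\}$, so $d(x,y)\neq 1$ and the implication again holds vacuously.

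Otherwise $d(x,y)$ was set by equation~(\ref{completingthemetric}). Here I would just check the six cases: cases~3--5 assign $3$ or $4$ and case~6 assigns $2$, so only cases~1 and~2 can yield $d(x,y)=1$. But those two cases state exactly that either $x=\hat{p}$ with $y\notin S\cup B$, or $y=\hat{p}$ with $x\notin S\cup B$, which is the desired conclusion. The proof is really a pure case check against the piecewise definitions; there is no substantive obstacle, since Lemma~\ref{distancesarezeroto4} already reflects the same kind of exhaustion and I am merely reading off which cases the value $1$ can come from.
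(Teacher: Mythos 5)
Your proof is correct and takes essentially the same approach as the paper: the paper's one-line argument ("inspect equation~(\ref{completingthemetric}), which is the only equation that may set distances to $1$") is exactly your case check, just stated more tersely. You have simply spelled out that equations~(\ref{distanceassignment}) and~(\ref{trivialdistance}) never produce the value $1$, so any $d(x,y)=1$ must come from the first two cases of equation~(\ref{completingthemetric}).
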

\begin{proof}
Inspect equation~(\ref{completingthemetric}), which
is the only equation that may set distances to $1$.
\end{proof}

\comment{
Below is a consequence of Lemma~\ref{onlysourceofdistance1}.

\begin{lemma}\label{sourceofshortdistances}
For all distinct
$x$, $y$, $z\in[n]$, if $d(x,y)=d(x,z)=1$, then $x=\hat{p}$ and $y$,
$z\notin S\cup B$.
\end{lemma}
}


\begin{lemma}\label{ordinarydistancesare2}
For all distinct $x$, $y\in[n]\setminus (S\cup B)$,
$d\left(x,y\right)=2$.
\end{lemma}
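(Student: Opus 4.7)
The plan is to split into two cases depending on whether the unordered pair $\{x,y\}$ corresponds to some query of $A^d$, and to show that in each case the distance is forced to be $2$.

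First I would handle the queried case: suppose $(x,y)=(x_i,y_i)$ or $(y,x)=(x_i,y_i)$ for some $i\in[q(n)]$, so that $d(x,y)$ is assigned by equation~(\ref{distanceassignment}). Since $x,y\notin S$, only the last two cases of~(\ref{distanceassignment}) apply, giving $d(x,y)\in\{2,4\}$ depending on whether $\max\{\alpha_{i-1}(x_i),\alpha_{i-1}(y_i)\}$ exceeds $\delta n$. Because $x,y\notin B$, equation~(\ref{badpoints}) gives $\alpha_{q(n)}(x),\alpha_{q(n)}(y)\le\delta n$, and Lemma~\ref{monotonicity} implies $\alpha_{i-1}(x)\le\alpha_{q(n)}(x)\le\delta n$ and similarly for $y$. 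Since $\{x_i,y_i\}=\{x,y\}$, we conclude $\max\{\alpha_{i-1}(x_i),\alpha_{i-1}(y_i)\}\le\delta n$, so the sixth case of~(\ref{distanceassignment}) fires and $d(x,y)=2$.

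Next I would handle the unqueried case: suppose $(x,y),(y,x)\notin\{(x_i,y_i)\mid i\in[q(n)]\}$, so that $d(x,y)$ is assigned by equation~(\ref{completingthemetric}). By Lemma~\ref{optimalisinpreservedregion}, $\hat{p}\in S$, whereas $x,y\notin S$, so $x\neq\hat{p}$ and $y\neq\hat{p}$; this rules out the first two cases of~(\ref{completingthemetric}). The third case requires $x,y\in S\cup B$, contradicting our hypothesis. The fourth case requires $x\in(S\cup B)\setminus\{\hat{p}\}$, again contradicting $x\notin S\cup B$, and similarly for the fifth case. Hence only the sixth case remains and $d(x,y)=2$.

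Combining the two cases yields $d(x,y)=2$ for all distinct $x,y\in[n]\setminus(S\cup B)$. No step here is really an obstacle; the argument is a pure case check against the construction, and the only substantive input is the bound $\alpha_{q(n)}(x)\le\delta n$ for $x\notin B$ combined with the monotonicity given by Lemma~\ref{monotonicity}.
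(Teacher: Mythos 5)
Your proof is correct and follows essentially the same route as the paper's: a case split on whether $\{x,y\}$ was queried, ruling out all cases of equations~(\ref{distanceassignment}) and~(\ref{completingthemetric}) except the ones giving distance $2$. The only cosmetic difference is that you re-derive the bound $\alpha_{i-1}(x)\le\delta n$ from Lemma~\ref{monotonicity} and the definition of $B$, whereas the paper invokes Lemma~\ref{monotonicitysame} (in contrapositive form) to do the same.
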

\begin{proof}
By Lemma~\ref{monotonicitysame},
$\max\{\alpha_{i-1}(x_i),\alpha_{i-1}(y_i)\}>\delta n$
means $\{x_i,y_i\}\cap B\neq\emptyset$, where $i\in[q(n)+1]$.
So
only the second-to-last case
in equation~(\ref{distanceassignment}), which sets $d(x_i,y_i)=2$,
may be consistent
with $x_i$, $y_i\notin S\cup B$.

By
Lemma~\ref{optimalisinpreservedregion},
$\hat{p}\in S$.
So
only the last case
in equation~(\ref{completingthemetric}),
which sets $d(x,y)=2$,
may be consistent
with $x$, $y\notin S\cup B$.
\end{proof}


\comment{
\begin{lemma}\label{askeddistancesincidentonoptimalpoint}
For all $i\in[q(n)]$,
$\alpha_{i-1}(\hat{p})\le\delta n$.
\end{lemma}
\begin{proof}
By Lemma~\ref{sparselyaskedpoint},
$\hat{p}\in S\setminus B$.
$\hat{p}\notin B$, which
together with equation~(\ref{badpoints}) and
Lemma~\ref{monotonicity}
completes the proof.
\end{proof}
}

\begin{lemma}\label{optimalpointdistances1or3}
For all $x\in[n]\setminus\{\hat{p}\}$,
$d(\hat{p},x)\in\{1,3\}$.
\end{lemma}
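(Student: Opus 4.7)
The plan is to split into two cases based on whether the unordered pair $\{\hat{p},x\}$ corresponds to some query of $A^d$, and to rely on two previously established facts about $\hat{p}$: namely, $\hat{p}\in S$ (Lemma~\ref{optimalisinpreservedregion}) and $\alpha_{i-1}(\hat{p})\le\delta n$ for every $i\in[q(n)+1]$ (Lemma~\ref{sparselyaskedpoint}). These two facts are what pin down the branch of the construction that fires.

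In the first case, $d(\hat{p},x)$ is set by equation~(\ref{distanceassignment}) at some step $i$ with $\{x_i,y_i\}=\{\hat{p},x\}$. Since $\hat{p}\in S$, only the first three branches of that equation are eligible: either $x\in S$ as well (yielding $d=3$), or $x\notin S$ in which case the applicable branch (second or third, depending on whether $\hat{p}=x_i$ or $\hat{p}=y_i$) has the degree condition $\alpha_{i-1}(\hat{p})\le\delta n$; this condition is exactly what Lemma~\ref{sparselyaskedpoint} provides, so $d(\hat{p},x)=3$ in this subcase as well. The fourth and fifth branches would require $\alpha_{i-1}(\hat{p})>\delta n$, which is ruled out, and the sixth and seventh branches require $\hat{p}\notin S$.

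In the second case, $d(\hat{p},x)$ is set by equation~(\ref{completingthemetric}). I will use $\hat{p}\in S\subseteq S\cup B$ to eliminate branches: the fourth and fifth branches cannot fire, because whichever endpoint equals $\hat{p}$ lies in $\{\hat{p}\}$ (violating the ``$\notin (S\cup B\cup\{\hat{p}\})$'' requirement) or lies in $S\cup B$ but fails the ``$\setminus\{\hat{p}\}$'' requirement. If $x\notin S\cup B$, the first or second branch applies, giving $d(\hat{p},x)=1$; if $x\in S\cup B$, then both endpoints lie in $S\cup B$ and the third branch applies, giving $d(\hat{p},x)=3$. In particular the ``otherwise'' branch (which would give $2$) is unreachable when one endpoint is $\hat{p}$.

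The argument is essentially a careful case inspection, so the only real obstacle is bookkeeping: making sure that across both equations~(\ref{distanceassignment}) and~(\ref{completingthemetric}), the branches that produce a value other than $1$ or $3$ (the $2$-branches and the $4$-branches) are all excluded under the combined hypotheses $\hat{p}\in S$ and $\alpha_{i-1}(\hat{p})\le\delta n$. Once this exclusion is done branch by branch, the conclusion $d(\hat{p},x)\in\{1,3\}$ is immediate.
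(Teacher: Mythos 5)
Your proposal is correct and follows the same route as the paper's own proof: split according to whether $\{\hat{p},x\}$ was queried, then use $\hat{p}\in S$ (Lemma~\ref{optimalisinpreservedregion}) and $\alpha_{i-1}(\hat{p})\le\delta n$ (Lemma~\ref{sparselyaskedpoint}) to restrict equations~(\ref{distanceassignment}) and~(\ref{completingthemetric}) to their first three branches. You spell out the branch-by-branch exclusions (including why the final ``otherwise'' branch of equation~(\ref{completingthemetric}) is unreachable) more explicitly than the paper does, but the underlying argument is identical.
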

\begin{proof}
By
Lemma~\ref{optimalisinpreservedregion} and
inequality~(\ref{sparselyaskedpointequation}),
only
the first three
cases in
equation~(\ref{distanceassignment}),
which set $d(x_i,y_i)=3$,
may be consistent with
$x_i=\hat{p}$ or $y_i=\hat{p}$.

Again by Lemma~\ref{optimalisinpreservedregion},
only the first three cases in equation~(\ref{completingthemetric}),
which set $d(x,y)\in\{1,3\}$,
may be consistent with $x=\hat{p}$ or $y=\hat{p}$.
\end{proof}

\begin{lemma}\label{illegaldistances1}
There do not exist distinct $x$, $y$, $z\in[n]$
with $d(x,y)=1$ and $\{d(x,z), d(y,z)\}=\{2, 4\}$.
\end{lemma}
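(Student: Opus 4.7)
The plan is to argue by contradiction, exploiting the very restrictive description of which triples of distances are possible. Suppose, for a contradiction, that distinct $x$, $y$, $z\in[n]$ exist with $d(x,y)=1$ and $\{d(x,z),d(y,z)\}=\{2,4\}$.

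First, I would invoke Lemma~\ref{onlysourceofdistance1} on the pair $(x,y)$. Since $d(x,y)=1$, one of $x$, $y$ equals $\hat{p}$; by the symmetry $d(x,z)\leftrightarrow d(y,z)$ built into the hypothesis $\{d(x,z),d(y,z)\}=\{2,4\}$, I can relabel so that $x=\hat{p}$ (and then $y\notin S\cup B$, though this fact is not even needed for the contradiction).

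Next, I would apply Lemma~\ref{optimalpointdistances1or3} to the pair $(\hat{p},z)=(x,z)$, noting that $z\ne\hat{p}=x$ since $x$, $y$, $z$ are distinct. This yields $d(x,z)\in\{1,3\}$, so in particular $d(x,z)\notin\{2,4\}$. But the hypothesis $\{d(x,z),d(y,z)\}=\{2,4\}$ forces $d(x,z)\in\{2,4\}$, contradiction.

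There is essentially no obstacle here: the lemma is an immediate consequence of the two preceding structural lemmas, because Lemma~\ref{onlysourceofdistance1} pins any distance-$1$ pair to include $\hat{p}$, and Lemma~\ref{optimalpointdistances1or3} then rules out the values $2$ and $4$ on any edge incident to $\hat{p}$. The whole proof should fit in a few lines.
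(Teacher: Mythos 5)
Your proof is correct and matches the paper's own argument essentially line for line: apply Lemma~\ref{onlysourceofdistance1} to conclude $\hat{p}\in\{x,y\}$, relabel so $x=\hat{p}$, and then Lemma~\ref{optimalpointdistances1or3} gives $d(x,z)\in\{1,3\}$, contradicting $\{d(x,z),d(y,z)\}=\{2,4\}$. No differences worth noting.
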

\begin{proof}
By Lemma~\ref{onlysourceofdistance1},
$d(x,y)=1$ implies $\hat{p}\in\{x,y\}$.
By symmetry, assume $x=\hat{p}$.
Then $d(x,z)\in\{1,3\}$ by
Lemma~\ref{optimalpointdistances1or3}.
\end{proof}

\begin{lemma}\label{illegaldistances2}
There do not exist distinct $x$, $y$, $z\in[n]$
with $d(x,y)=d(x,z)=1$ and $d(y,z)\in\{3, 4\}$.
\end{lemma}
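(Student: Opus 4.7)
The plan is to do a short case analysis driven by Lemma~\ref{onlysourceofdistance1}, which pins down every edge of weight $1$ as incident to $\hat{p}$ and going into $[n]\setminus(S\cup B)$. Since we have two such weight-$1$ edges sharing the vertex $x$, either $x=\hat{p}$ or both $y$ and $z$ must be $\hat{p}$; the latter contradicts distinctness, so the real work is in the former case.

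First I would suppose for contradiction that distinct $x,y,z\in[n]$ witness the forbidden configuration. I would then split on whether $x=\hat{p}$. If $x\neq\hat{p}$, then applying Lemma~\ref{onlysourceofdistance1} to $d(x,y)=1$ forces $y=\hat{p}$, and applying it to $d(x,z)=1$ forces $z=\hat{p}$; this contradicts $y\neq z$.

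The remaining case is $x=\hat{p}$. Here Lemma~\ref{onlysourceofdistance1} applied to both $d(x,y)=1$ and $d(x,z)=1$ yields $y,z\in[n]\setminus(S\cup B)$, and in particular $y$ and $z$ are distinct points of $[n]\setminus(S\cup B)$. Then Lemma~\ref{ordinarydistancesare2} gives $d(y,z)=2$, contradicting the hypothesis $d(y,z)\in\{3,4\}$.

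I do not anticipate any real obstacle: both subcases reduce immediately to previously established lemmas, and no new counting or combinatorial argument is needed. The only mild subtlety is remembering to invoke Lemma~\ref{ordinarydistancesare2} exactly when both non-$\hat{p}$ endpoints land outside $S\cup B$, which is precisely what Lemma~\ref{onlysourceofdistance1} hands us.
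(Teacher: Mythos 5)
Your proof is correct and takes essentially the same route as the paper: apply Lemma~\ref{onlysourceofdistance1} to both weight-$1$ edges to conclude $x=\hat{p}$ and $y,z\notin S\cup B$, then invoke Lemma~\ref{ordinarydistancesare2} to force $d(y,z)=2$. The paper compresses the ``$x=\hat{p}$'' step into one line, but the case split you make explicit is exactly the reasoning behind it.
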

\begin{proof}
By Lemma~\ref{onlysourceofdistance1},
$d(x,y)=d(x,z)=1$ implies $x=\hat{p}$ and $y$, $z\notin S\cup B$.
Then $d(y,z)=2$ by
Lemma~\ref{ordinarydistancesare2}.
\end{proof}


Lemmas~\ref{illegaldistances1}--\ref{illegaldistances2}
forbid all possible violations of the triangle inequality,
yielding the following lemma.

\begin{lemma}\label{itismetric}
$([n],d)$ is a metric space.
\end{lemma}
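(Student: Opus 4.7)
The plan is to check the four metric axioms in turn, most of which are immediate from the construction, and then reduce the triangle inequality to the two forbidden configurations already handled by Lemmas~\ref{illegaldistances1}--\ref{illegaldistances2}.

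First, I would dispatch the easy axioms. Non-negativeness and identity of indiscernibles follow from Lemma~\ref{distancesarezeroto4} (all distances between distinct points are in $\{1,2,3,4\}$) together with equation~(\ref{trivialdistance}). Symmetry holds by construction, since equation~(\ref{distanceassignment}) and equation~(\ref{completingthemetric}) freeze $d(x,y)$ and $d(y,x)$ to the same value (recall the convention that freezing $d(x,y)$ implicitly freezes $d(y,x)$).

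The main work is the triangle inequality. As noted in the excerpt, it suffices to verify $d(x,y)+d(x,z)\ge d(y,z)$ for distinct $x,y,z\in[n]$. Since Lemma~\ref{distancesarezeroto4} confines all such distances to $\{1,2,3,4\}$, a violation can occur only when the left-hand side is strictly less than $4$, because any sum of two values from $\{1,2,3,4\}$ that is at least $4$ already dominates any single distance. Enumerating the pairs with sum less than $4$: either (i) $d(x,y)=d(x,z)=1$ and $d(y,z)\in\{3,4\}$, or (ii) $\{d(x,y),d(x,z)\}=\{1,2\}$ and $d(y,z)=4$. Configuration (i) is ruled out by Lemma~\ref{illegaldistances2}, and configuration (ii) is ruled out by Lemma~\ref{illegaldistances1} (applied with the vertex incident to the distance-$1$ edge playing the role of $x$ in that lemma). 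By symmetry of the three vertices, the same analysis covers the other two triangle inequalities $d(x,y)+d(y,z)\ge d(x,z)$ and $d(x,z)+d(y,z)\ge d(x,y)$.

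There is no serious obstacle here, since all the delicate case analysis has already been encapsulated in Lemmas~\ref{illegaldistances1} and~\ref{illegaldistances2}; the only minor point requiring care is making sure that the short enumeration of potentially bad sums is exhaustive, which is immediate once one writes out the finitely many pairs from $\{1,2,3,4\}$ whose sum is less than the maximum admissible distance $4$.
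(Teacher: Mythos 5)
Your proposal is correct and follows essentially the same approach as the paper: use Lemma~\ref{distancesarezeroto4} to confine distances between distinct points to $\{1,2,3,4\}$, observe that the only possible triangle-inequality violations are then the configurations excluded by Lemmas~\ref{illegaldistances1}--\ref{illegaldistances2}, and handle symmetry and the remaining axioms directly from the construction. Your explicit enumeration of the sums less than $4$ merely spells out the case check that the paper leaves implicit.
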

\begin{proof}
Lemmas~\ref{distancesarezeroto4}~and~\ref{illegaldistances1}--\ref{illegaldistances2}
establish the triangle inequality for $d$.
Furthermore, $d$ is symmetric because (1)~freezing $d(x,y)$ automatically
freezes $d(y,x)$ to the same value,
(2)~forbidding
repeated queries
prevents equation~(\ref{distanceassignment})
from assigning inconsistent values to one distance and
(3)~equation~(\ref{completingthemetric}) is symmetric.
All the other axioms for metrics
are easy to verify.
\end{proof}

Recall that $p$ denotes the output of $A^d$.
We proceed to
compare $\sum_{x\in[n]}\,d(p,x)$ with $\sum_{x\in[n]}\,d(\hat{p},x)$.

\begin{lemma}\label{identifyingjumps}
There exist
$k(1)$, $k(2)$, $\ldots$, $k(n-1)\in [q(n)]$
and distinct $z_{k(1)}$, $z_{k(2)}$, $\ldots$, $z_{k(n-1)}\in[n]$
such that
\begin{eqnarray}
\alpha_{k(t)-1}(p)&=&t-1,\label{beforejumping}\\\
\alpha_{k(t)}(p)&=&t,\label{afterjumping}\\
\left(p,z_{k(t)}\right)
&\in&\left\{\left(x_{k(t)}, y_{k(t)}\right), \left(y_{k(t)}, x_{k(t)}\right)
\right\}
\label{algorithmoutputparticipate}
\end{eqnarray}
for all $t\in[n-1]$.
\end{lemma}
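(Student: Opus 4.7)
The plan is to exploit Lemma~\ref{monotonicity} (which says each query raises $\alpha_i(p)$ by at most $1$, and by exactly $1$ precisely when $p\in\{x_i,y_i\}$) together with the boundary conditions $\alpha_0(p)=0$ and $\alpha_{q(n)}(p)=n-1$ from equation~(\ref{outputallasked}). Since the integer sequence $\alpha_0(p),\alpha_1(p),\ldots,\alpha_{q(n)}(p)$ starts at $0$, ends at $n-1$, and moves by $0$ or $1$ at each step, it must pass through every integer in $\{0,1,\ldots,n-1\}$, and for each $t\in[n-1]$ there is a (unique) index where it jumps from $t-1$ to $t$. Define $k(t)$ to be exactly this index. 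Equations~(\ref{beforejumping})~and~(\ref{afterjumping}) then hold by construction.

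Next I would invoke Lemma~\ref{monotonicity} once more: because $\alpha_{k(t)}(p)-\alpha_{k(t)-1}(p)=1$, we have $p\in\{x_{k(t)},y_{k(t)}\}$. Define $z_{k(t)}$ to be the other endpoint of the $k(t)$th query, i.e., $z_{k(t)}=y_{k(t)}$ if $p=x_{k(t)}$, and $z_{k(t)}=x_{k(t)}$ otherwise. This immediately gives equation~(\ref{algorithmoutputparticipate}).

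The only point that requires a small additional argument is distinctness of $z_{k(1)},\ldots,z_{k(n-1)}$. Suppose for contradiction that $z_{k(s)}=z_{k(t)}$ for some $s<t$ in $[n-1]$. Then $k(s)<k(t)$ (since $k$ is strictly increasing by construction) and, by equation~(\ref{algorithmoutputparticipate}) applied to $s$ and $t$, the unordered pairs $\{x_{k(s)},y_{k(s)}\}=\{p,z_{k(s)}\}=\{p,z_{k(t)}\}=\{x_{k(t)},y_{k(t)}\}$ coincide. This makes the $k(t)$th query a repeat of the $k(s)$th one (up to the symmetry that identifies $(x,y)$ with $(y,x)$), contradicting our standing assumption that repeated queries are forbidden.

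There is no serious obstacle here: the statement is essentially a bookkeeping observation that unpacks Lemma~\ref{monotonicity} plus equation~(\ref{outputallasked}). The only subtlety is remembering to use the no-repeated-queries convention to rule out collisions among the $z_{k(t)}$'s.
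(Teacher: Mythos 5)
Your proof is correct and follows essentially the same route as the paper's: use Lemma~\ref{monotonicity} together with $\alpha_0(p)=0$ and equation~(\ref{outputallasked}) to find the unique ``jump'' indices $k(t)$, extract $z_{k(t)}$ as the other endpoint of the $k(t)$th query via Lemma~\ref{monotonicity}, and deduce distinctness from the no-repeated-queries convention. Your write-up merely spells out a few steps the paper compresses (uniqueness of the jump index, the explicit definition of $z_{k(t)}$), but the argument is the same.
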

\begin{proof}
By
Lemma~\ref{monotonicity}, equation~(\ref{outputallasked}) and the easy fact
that $\alpha_0(p)=0$,
there exist distinct $k(1)$, $k(2)$, $\ldots$, $k(n-1)\in [q(n)]$ satisfying
equations~(\ref{beforejumping})--(\ref{afterjumping})
for all $t\in[n-1]$.\footnote{Observe that
$\alpha_i(p)$ must go through all of $0$, $1$, $\ldots$, $n-1$ as $i$
increases from $0$ to $q(n)$.}
Lemma~\ref{monotonicity} and
equations~(\ref{beforejumping})--(\ref{afterjumping}) imply
$p\in\{x_{k(t)}, y_{k(t)}\}$, establishing the existence of
$z_{k(t)}$ satisfying
equation~(\ref{algorithmoutputparticipate}).
If $z_{k(1)}$, $z_{k(2)}$, $\ldots$, $z_{k(n-1)}$ are not distinct, then
there are repeated queries by
equation~(\ref{algorithmoutputparticipate}), a contradiction.
\end{proof}

From now on, let
$k(1)$, $k(2)$, $\ldots$, $k(n-1)\in [q(n)]$
and distinct $z_{k(1)}$, $z_{k(2)}$, $\ldots$, $z_{k(n-1)}\in[n]$
satisfy
equations~(\ref{beforejumping})--(\ref{algorithmoutputparticipate})
for all $t\in[n-1]$.

\begin{lemma}\label{algorithmoutputtypicaldistances}
For each $t\in[n-1]$, if $t\ge \lceil\delta n\rceil+2$ and $z_{k(t)}\notin S$,
then
$d(p,z_{k(t)})=4$.
\end{lemma}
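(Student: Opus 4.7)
The plan is to translate the hypotheses about $t$ and $z_{k(t)}$ into a statement about which case of equation~(\ref{distanceassignment}) is triggered when the $k(t)$th query is answered, and then read off that the answer is $4$. The key observation is that equation~(\ref{algorithmoutputparticipate}) says $\{x_{k(t)},y_{k(t)}\}=\{p,z_{k(t)}\}$, so $d(p,z_{k(t)})=d(x_{k(t)},y_{k(t)})$ and was frozen by equation~(\ref{distanceassignment}). Hence it suffices to identify the relevant case in that seven-case definition.

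First I would extract the quantitative fact supplied by equation~(\ref{beforejumping}): since $t\ge\lceil\delta n\rceil+2$, we have
\[
\alpha_{k(t)-1}(p)=t-1\ge \lceil\delta n\rceil+1>\delta n.
\]
Combined with the hypothesis $z_{k(t)}\notin S$, this already rules out all cases in equation~(\ref{distanceassignment}) whose answer is $2$ or $3$, regardless of whether $p\in S$. Specifically, I would split on membership of $p$ in $S$. If $p\in S$, then the query involves one point in $S$ and one outside $S$, and the $\alpha_{k(t)-1}(p)>\delta n$ bound puts us in the fourth or fifth case of equation~(\ref{distanceassignment}), both of which assign the value $4$. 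If $p\notin S$, then both endpoints of the query lie outside $S$, and $\alpha_{k(t)-1}(p)>\delta n$ makes $\max\{\alpha_{k(t)-1}(x_{k(t)}),\alpha_{k(t)-1}(y_{k(t)})\}>\delta n$, placing us in the last case of equation~(\ref{distanceassignment}), which also assigns $4$.

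Either way, $d(x_{k(t)},y_{k(t)})=4$, and by equation~(\ref{algorithmoutputparticipate}) together with the symmetry of $d$ this is the same as $d(p,z_{k(t)})$. There is no serious obstacle: the argument is essentially a case check, and the only thing to be careful about is that the ordered pair in equation~(\ref{algorithmoutputparticipate}) may place $p$ in either coordinate of the query, which is handled by the inherent symmetry of the relevant cases (cases $2$--$3$ and $4$--$5$ of equation~(\ref{distanceassignment})) as well as of the last case. No appeal to Lemma~\ref{monotonicitysame} or to $B$ is needed here; only the numerical bound $\alpha_{k(t)-1}(p)>\delta n$ and the hypothesis $z_{k(t)}\notin S$.
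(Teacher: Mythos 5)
Your argument is correct and matches the paper's proof essentially step for step: both derive $\alpha_{k(t)-1}(p)=t-1>\delta n$ from the hypothesis $t\ge\lceil\delta n\rceil+2$, then case-split on whether $p\in S$, and in each branch identify the unique applicable case of equation~(\ref{distanceassignment}) (fourth/fifth or seventh), all of which assign the value $4$. The only cosmetic difference is that the paper fixes $p=x_{k(t)}$, $z_{k(t)}=y_{k(t)}$ by symmetry up front while you leave the coordinate assignment implicit and appeal to the symmetry of the relevant cases; this is immaterial.
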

\begin{proof}
Assume in equation~(\ref{algorithmoutputparticipate}) that
$p=x_{k(t)}$ and $z_{k(t)}=y_{k(t)}$; the other case will be symmetric.
By equation~(\ref{beforejumping}),
\begin{eqnarray}
\alpha_{k(t)-1}\left(x_{k(t)}\right)=t-1> \delta n.
\label{algorithmoutputverymuchasked}
\end{eqnarray}
\begin{enumerate}[{Case }1:]
\item $x_{k(t)}\in S$.
By equation~(\ref{distanceassignment}),
$x_{k(t)}\in S$
and $y_{k(t)}=z_{k(t)}\notin S$,
\begin{eqnarray}
d\left(x_{k(t)},y_{k(t)}\right)
=
\left\{
\begin{array}{ll}
3, &\text{if $\alpha_{k(t)-1}(x_{k(t)})\le \delta n$;}\\
4, &\text{if $\alpha_{k(t)-1}(x_{k(t)})> \delta n$.}
\end{array}
\right.
\label{distancesfromalgorithmoutput1}
\end{eqnarray}
\item $x_{k(t)}\notin S$.
By equation~(\ref{distanceassignment}),
$x_{k(t)}\notin S$
and $y_{k(t)}=z_{k(t)}\notin S$,
\begin{eqnarray}
d\left(x_{k(t)},y_{k(t)}\right)
=
\left\{
\begin{array}{ll}
2, &\text{if $\max\{\alpha_{k(t)-1}(x_{k(t)}),\alpha_{k(t)-1}(y_{k(t)})\}\le \delta n$;}\\
4, &\text{if $\max\{\alpha_{k(t)-1}(x_{k(t)}),\alpha_{k(t)-1}(y_{k(t)})\}> \delta n$.}
\end{array}
\right.
\label{distancesfromalgorithmoutput2}
\end{eqnarray}
\end{enumerate}
Equation~(\ref{algorithmoutputverymuchasked})
together with any one of
equations~(\ref{distancesfromalgorithmoutput1})--(\ref{distancesfromalgorithmoutput2})
implies $d(x_{k(t)},y_{k(t)})=4$.
Hence $d(p,z_{k(t)})=d(x_{k(t)},y_{k(t)})=4$.
\end{proof}

We are now able to analyze the quality of $p$ as a solution to
{\sc metric $1$-median}.

\begin{lemma}\label{analyzingalgorithmoutputassuboptimal}
$$\sum_{x\in[n]}\,d\left(p,x\right)\ge
4\left(n-2\left\lceil\delta n\right\rceil-2\right).$$
\end{lemma}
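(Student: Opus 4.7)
The plan is to recognize that the sum over all $x\in[n]$ decomposes as a sum over the distinct points $z_{k(1)},\ldots,z_{k(n-1)}$ supplied by Lemma~\ref{identifyingjumps}, and then to invoke Lemma~\ref{algorithmoutputtypicaldistances} to force most of those terms to equal $4$.

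First, I would use equation~(\ref{outputallasked}), namely $\alpha_{q(n)}(p)=n-1$, together with the fact that the $z_{k(t)}$ for $t\in[n-1]$ are $n-1$ distinct elements of $[n]$ each participating in a query involving $p$, to deduce that $\{z_{k(1)},\ldots,z_{k(n-1)}\}=[n]\setminus\{p\}$. Combined with equation~(\ref{trivialdistance}), this gives
\[
\sum_{x\in[n]}d(p,x)=\sum_{t=1}^{n-1}d\bigl(p,z_{k(t)}\bigr).
\]

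Next, I would restrict attention to the indices $t\in[n-1]$ with $t\ge\lceil\delta n\rceil+2$, of which there are exactly $n-2-\lceil\delta n\rceil$. Because $|S|=\lceil\delta n\rceil$ and the $z_{k(t)}$ are distinct, at most $\lceil\delta n\rceil$ of these indices $t$ can satisfy $z_{k(t)}\in S$. Hence at least $n-2-2\lceil\delta n\rceil$ of them satisfy both $t\ge\lceil\delta n\rceil+2$ and $z_{k(t)}\notin S$, and for each such $t$ Lemma~\ref{algorithmoutputtypicaldistances} yields $d(p,z_{k(t)})=4$.

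Finally, I would discard all the remaining terms using Lemma~\ref{distancesarezeroto4} (they are nonnegative) to conclude
\[
\sum_{x\in[n]}d(p,x)\ge 4\bigl(n-2\lceil\delta n\rceil-2\bigr).
\]
There is no real obstacle here: the heavy lifting was done in Lemma~\ref{algorithmoutputtypicaldistances} and Lemma~\ref{identifyingjumps}, and this lemma is essentially a counting argument combining the two.
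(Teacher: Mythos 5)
Your proof is correct and takes essentially the same approach as the paper: decompose the sum over $[n]$ via the distinct $z_{k(t)}$ from Lemma~\ref{identifyingjumps}, then apply Lemma~\ref{algorithmoutputtypicaldistances} to force at least $n-2\lceil\delta n\rceil-2$ of the summands to equal $4$. The only minor difference is that you establish the identity $\{z_{k(1)},\ldots,z_{k(n-1)}\}=[n]\setminus\{p\}$ to get equality in the first step, whereas the paper is content with the inequality $\sum_{x\in[n]}d(p,x)\ge\sum_{t\in[n-1]}d(p,z_{k(t)})$ directly from distinctness; both are valid and the rest of the counting is the same.
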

\begin{proof}
By the distinctness of
$z_{k(1)}$, $z_{k(2)}$, $\ldots$, $z_{k(n-1)}$ in
Lemma~\ref{identifyingjumps},
\begin{eqnarray}
\sum_{x\in[n]}\,d\left(p,x\right)
\ge\sum_{t\in[n-1]}\,d\left(p,z_{k(t)}\right).
\label{initialinequalityinanalyzingalgorithmoutput}
\end{eqnarray}
Write
$A=\{t\in[n-1]\mid z_{k(t)}\in S\}$.
As $z_{k(1)}$, $z_{k(2)}$, $\ldots$, $z_{k(n-1)}$ are distinct,
\begin{eqnarray}
|A|\le |S|.
\end{eqnarray}
Furthermore,
\begin{eqnarray}
&&\sum_{t\in[n-1]}\,d\left(p,z_{k(t)}\right)\nonumber\\
&\ge&\sum_{t\in[n-1],\,t\ge \lceil\delta n\rceil+2,\,t\notin A}\,
d\left(p,z_{k(t)}\right)\nonumber\\
&\stackrel{\text{Lemma~\ref{algorithmoutputtypicaldistances}}}{=}&
\sum_{t\in[n-1],\,t\ge \lceil\delta n\rceil+2,\,t\notin A}\,
4\nonumber\\
&\ge& 4\left(n-\left\lceil\delta n\right\rceil-2-|A|\right).
\label{lastinequalityinanalyzingalgorithmoutput}
\end{eqnarray}
Equations~(\ref{initialinequalityinanalyzingalgorithmoutput})--(\ref{lastinequalityinanalyzingalgorithmoutput})
and $|S|=\lceil\delta n\rceil$ complete the proof.
\end{proof}

We now analyze the quality of $\hat{p}$ as a solution to
{\sc metric $1$-median}.
The following lemma is immediate from
equation~(\ref{completingthemetric}).

\begin{lemma}\label{optimalpointhasmanydistancesbeing1}
For all $y\in [n]\setminus(S\cup B)$,
if
$y\neq \hat{p}$ and
$(\hat{p},y)$, $(y,\hat{p})\notin \{(x_j,y_j)\mid j\in[q(n)]\}$,
then $d(\hat{p},y)=1$.
\end{lemma}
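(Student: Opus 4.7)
The plan is to apply equation~(\ref{completingthemetric}) directly. The hypothesis that $(\hat{p}, y), (y, \hat{p}) \notin \{(x_j, y_j) \mid j \in [q(n)]\}$ is precisely the condition under which the distance $d(\hat{p}, y)$ was \emph{not} frozen by equation~(\ref{distanceassignment}) during the query-answering phase, so it must instead have been set by the completion rule in equation~(\ref{completingthemetric}). Identifying the value of $d(\hat{p}, y)$ therefore reduces to reading off the correct branch of that piecewise definition.

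Taking $(x, y) = (\hat{p}, y)$, the conditions $x = \hat{p}$ (true by choice) and $y \notin S \cup B$ (true by hypothesis) match the first branch of equation~(\ref{completingthemetric}) verbatim, giving $d(\hat{p}, y) = 1$. Since the six branches of equation~(\ref{completingthemetric}) are mutually exclusive (as observed immediately after the equation), no competing case can override this value. The assumption $y \neq \hat{p}$ also rules out the trivial case $d(\hat{p}, \hat{p}) = 0$ coming from equation~(\ref{trivialdistance}).

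There is essentially no obstacle here; the lemma is a bookkeeping sanity check that the construction does what one would expect on the stated pairs. The only subtlety worth flagging is the logical dependence on the non-query hypothesis, which is what shields the distance from the alternative assignment rule~(\ref{distanceassignment}) and lets the completion rule~(\ref{completingthemetric}) take over.
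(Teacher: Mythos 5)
Your proof is correct and matches the paper's intent exactly: the paper simply asserts the lemma is ``immediate from equation~(\ref{completingthemetric}),'' and you have spelled out precisely why---the non-query hypothesis routes the assignment to the completion rule, and the first branch of that rule applies. No further comment needed.
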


\begin{lemma}\label{analyzingoptimalpoint}
$$\sum_{y\in[n]}\,d\left(\hat{p},y\right)
\le
n+3\cdot\left(\left\lceil\delta n\right\rceil+o(n)+\delta n\right).
$$
\end{lemma}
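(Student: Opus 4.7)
The plan is to bound $\sum_{y \in [n]} d(\hat{p}, y)$ by isolating a small ``exceptional'' subset $T$ of $[n] \setminus \{\hat{p}\}$ on which $d(\hat{p}, \cdot)$ might take the value $3$, and showing that on the complement every distance from $\hat{p}$ equals $1$. Lemma~\ref{optimalpointdistances1or3} already guarantees $d(\hat{p},y) \in \{1,3\}$ for every $y \ne \hat{p}$, so the task reduces to controlling $|T|$.

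Concretely, I would take $T$ to consist of those $y \ne \hat{p}$ that lie in $S \cup B$ or are incident to a query involving $\hat{p}$, i.e., exactly the points to which Lemma~\ref{optimalpointhasmanydistancesbeing1} does \emph{not} apply. For every $y \in [n] \setminus (T \cup \{\hat{p}\})$, that lemma yields $d(\hat{p},y) = 1$; for $y \in T$, Lemma~\ref{optimalpointdistances1or3} yields $d(\hat{p},y) \le 3$; and trivially $d(\hat{p},\hat{p}) = 0$.

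Next, I would bound $|T|$ by a union bound over its three defining pieces: $|S| = \lceil \delta n \rceil$ by construction, $|B| = o(n)$ by Lemma~\ref{fewbadpoints}, and the number of $y$ incident to a query at $\hat{p}$ is exactly $\alpha_{q(n)}(\hat{p})$, which is at most $\delta n$ by Lemma~\ref{sparselyaskedpoint} (applied with $i = q(n)+1$). Hence $|T| \le \lceil \delta n \rceil + o(n) + \delta n$.

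The claim then follows from the one-line estimate
\[
\sum_{y \in [n]} d(\hat{p}, y) \;\le\; 3 \cdot |T| \;+\; 1 \cdot (n - 1 - |T|) \;\le\; n + 3\bigl(\lceil \delta n \rceil + o(n) + \delta n\bigr).
\]
The argument is essentially bookkeeping, so I do not anticipate a genuine obstacle; the only slightly delicate point is to verify that the three summands in the bound on $|T|$ really do correspond to the three constituents of $T$ (so that a union bound legitimately recovers the stated inequality), and to ensure that $\hat{p}$ itself is correctly excluded from $T$ so that the estimate on $[n] \setminus (T \cup \{\hat{p}\})$ is consistent with Lemma~\ref{optimalpointhasmanydistancesbeing1}.
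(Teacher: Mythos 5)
Your proof is correct and follows essentially the same decomposition as the paper: split $[n]$ into a ``bad'' set (roughly $S\cup B\cup N_{q(n)}(\hat{p})$), bound its size by $\lceil\delta n\rceil + o(n) + \delta n$ via a union bound and Lemma~\ref{sparselyaskedpoint}, and use Lemma~\ref{optimalpointhasmanydistancesbeing1} to set distances on the complement to $1$. The only small difference is that you bound distances to the bad set by $3$ using Lemma~\ref{optimalpointdistances1or3}, whereas the paper simply invokes the blanket bound of $4$ from Lemma~\ref{distancesarezeroto4}; your variant is marginally tighter but yields the same conclusion.
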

\begin{proof}
By equation~(\ref{neighborhoodinsubgraph}),
$$N_{q(n)}\left(\hat{p}\right)
=\left\{
y\in[n]\mid
\left\{
\left(\hat{p},y\right), \left(y,\hat{p}\right)
\right\}
\cap \left\{\left(x_j,y_j\right)\mid
j\in\left[q(n)\right]\right\}
\neq\emptyset
\right\}.$$
This and
Lemma~\ref{optimalpointhasmanydistancesbeing1} imply
$d(\hat{p},y)=1$ for all $y\in [n]\setminus(S\cup B)$ with
$y\neq \hat{p}$ and
$y\notin N_{q(n)}(\hat{p})$.
Therefore,
\begin{eqnarray}
\sum_{y\in [n]\setminus(S\cup B\cup
N_{q(n)}(\hat{p}))}\,d\left(\hat{p},y\right)
\le
n-\left|\,S\cup B\cup
N_{q(n)}\left(\hat{p}\right)\,\right|.
\label{distance1parts}
\end{eqnarray}
Clearly,
\begin{eqnarray}
\sum_{y\in S\cup B\cup N_{q(n)}(\hat{p})}\,d\left(\hat{p},y\right)
\stackrel{\text{Lemma~\ref{distancesarezeroto4}}}{\le}
\sum_{y\in S\cup B\cup N_{q(n)}(\hat{p})}\,4
= 4\cdot\left|\,S\cup B\cup
N_{q(n)}\left(\hat{p}\right)\,\right|
\label{largedistancesparts}
\end{eqnarray}
Furthermore,
\begin{eqnarray}
\left|\,N_{q(n)}\left(\hat{p}\right)\,\right|
\stackrel{\text{equation~(\ref{numberoffrozenincidentdistances})}}{=}
\alpha_{q(n)}\left(\hat{p}\right)
\stackrel{\text{inequality~(\ref{sparselyaskedpointequation})}}{\le}
\delta n.\nonumber
\end{eqnarray}
This and Lemma~\ref{fewbadpoints}
imply
\begin{eqnarray}
\left|\,S\cup B\cup N_{q(n)}\left(\hat{p}\right)\,\right|
\le \left\lceil\delta n\right\rceil+o(n)+\delta n
\label{fewneighborsforoptimalpoint}
\end{eqnarray}
as $|S|=\lceil\delta n\rceil$.
To complete the proof, sum
up
inequalities~(\ref{distance1parts})--(\ref{largedistancesparts})
and then use
inequality~(\ref{fewneighborsforoptimalpoint})
in the trivial way.
\end{proof}

Combining
Lemmas~\ref{itismetric},~\ref{analyzingalgorithmoutputassuboptimal}~and~\ref{analyzingoptimalpoint}
yields our main theorem, stated below.

\begin{theorem}\label{maintheorem}
{\sc Metric $1$-median} has no deterministic $o(n^2)$-query
$(4-\epsilon)$-approximation algorithm for any constant $\epsilon>0$.
\end{theorem}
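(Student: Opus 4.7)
The plan is to derive a contradiction with the existence of a deterministic $o(n^2)$-query $(4-\epsilon)$-approximation algorithm $A$ for any fixed constant $\epsilon>0$. Having fixed $A$ and chosen a sufficiently small $\delta\in(0,0.1)$ depending on $\epsilon$ (to be specified momentarily), the construction of the metric $d$ in equations~(\ref{distanceassignment}),~(\ref{completingthemetric}),~and~(\ref{trivialdistance}) produces, by Lemma~\ref{itismetric}, a legitimate metric space $([n],d)$ on which $A$ must perform at least as well as its approximation guarantee promises. Since the entire construction was tailored so that $A$ would answer as if it were running on an adversarial input, $A$'s output $p$ will be forced to have large total distance, whereas $\hat{p}\in S$ will have small total distance.

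Concretely, I would combine the two bounds already established. By Lemma~\ref{analyzingalgorithmoutputassuboptimal},
\[
\sum_{x\in[n]}\,d(p,x)\;\ge\;4\bigl(n-2\lceil\delta n\rceil-2\bigr),
\]
while by Lemma~\ref{analyzingoptimalpoint},
\[
\min_{x^\prime\in[n]}\sum_{y\in[n]}\,d(x^\prime,y)\;\le\;\sum_{y\in[n]}\,d(\hat{p},y)\;\le\;n+3\bigl(\lceil\delta n\rceil+o(n)+\delta n\bigr).
\]
Dividing these two estimates gives an approximation ratio achieved by $A$ on $([n],d)$ of at least
\[
\frac{4\bigl(n-2\lceil\delta n\rceil-2\bigr)}{n+3\bigl(\lceil\delta n\rceil+o(n)+\delta n\bigr)},
\]
which, as $n\to\infty$ with $\delta$ fixed, tends to $4(1-2\delta)/(1+6\delta)$.

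Now I would exploit that $\lim_{\delta\to 0^+}4(1-2\delta)/(1+6\delta)=4$. Given $\epsilon>0$, choose (before fixing anything else) $\delta\in(0,0.1)$ so small that $4(1-2\delta)/(1+6\delta)>4-\epsilon/2$. Then for all sufficiently large $n$ the displayed ratio exceeds $4-\epsilon$, contradicting the assumed $(4-\epsilon)$-approximation guarantee of $A$. Hence no such $A$ exists, which is exactly Theorem~\ref{maintheorem}.

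The main obstacle is conceptually behind us: all the heavy lifting of verifying that $d$ is a metric (Lemma~\ref{itismetric}) and that the adversarial answers succeed in trapping $A$ into an expensive output (Lemma~\ref{analyzingalgorithmoutputassuboptimal}) has already been carried out. What remains is purely an asymptotic calculation: matching the $\delta$-dependent constants so that the ratio exceeds $4-\epsilon$ in the limit, and verifying that the $o(n)$ term from Lemma~\ref{fewbadpoints} together with the $\lceil\delta n\rceil$ terms in both the numerator and denominator are indeed dominated by $\delta n$ and $n$ for large $n$, which is immediate.
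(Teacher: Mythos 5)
Your proposal is correct and takes essentially the same route as the paper: both combine Lemma~\ref{itismetric} with Lemmas~\ref{analyzingalgorithmoutputassuboptimal}~and~\ref{analyzingoptimalpoint} and then let $\delta\to 0^+$; the only cosmetic difference is that you keep the ratio in the form $4(1-2\delta)/(1+6\delta)$ while the paper lower-bounds it further as $4(1-8\delta-o(1))$ before invoking the arbitrariness of $A$ and $\delta$.
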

\begin{proof}
Lemma~\ref{itismetric} asserts that $([n],d)$ is a metric space.
By
Lemmas~\ref{analyzingalgorithmoutputassuboptimal}~and~\ref{analyzingoptimalpoint},
$$
\sum_{x\in[n]}\,d\left(p,x\right)
\ge 4\left(1-8\delta-o(1)\right)\sum_{x\in[n]}\, d\left(\hat{p},x\right).
$$
This proves the theorem because
the deterministic $o(n^2)$-query algorithm $A$
and
the constant $\delta\in(0,0.1)$
are picked arbitrarily (note that $p$ denotes the output of $A^d$).
\end{proof}

Theorem~\ref{maintheorem}
complements
Theorem~\ref{nonadaptiveupperbound}.

It is possible to
simplify equation~(\ref{completingthemetric})
at the expensive of an additional assumption.
Without loss of generality, we may assume that $\alpha_{q(n)}(x)=n-1$
for all $x\in B$; this increases the query complexity by a multiplicative
factor of $O(1)$ by equation~(\ref{badpoints}).
Therefore, if $x\in B$ or $y\in B$, then
$d(x,y)$ will be frozen by equation~(\ref{distanceassignment}).
So the third to fifth
cases
in equation~(\ref{completingthemetric}),
which satisfies $x\in B$ or $y\in B$,
can be omitted.

\comment{
Define
$$Q\equiv \left\{
\text{unordered pair }
\left(x,y\right)\in[n]^2\mid A^d
\text{ ever queries for }
d\left(x,y\right)
\right\}$$
to be the set of
queries of $A^d$
treated as unordered
pairs.
Without loss of generality, assume $(x,x)\notin Q$ for all $x\in[n]$.
Let $G=([n],Q)$ be the simple undirected graph with vertex set $[n]$ and
edge set $Q$.
Denote the degree of $x\in[n]$ in $G$ by
$\text{deg}_G(x)=|\{y\in[n]\mid (x,y)\in Q\}|$,
and
\begin{eqnarray}
B\equiv \left\{x\in[n]\mid \text{deg}_G(x)\ge\epsilon n\right\}.
\label{setofbadvertices}
\end{eqnarray}

In the sequel, we will specify $d$ incrementally in several steps.
Note that $Q$ and
$B$ are independent of
$d$
because of the nonadaptivity of $A$;
hence
they will remain intact
during our
specification of $d$.

Below is an easy lemma.

\begin{lemma}[{Implicit in~\cite{Cha12}}]\label{numberofbadvertices}
$|\,B\,|=o(n)$.
\end{lemma}
\begin{proof}
We have
$$
\epsilon n\,
|\,B\,|
= \sum_{x\in B}\,\epsilon n
\stackrel{\text{Eq.~(\ref{setofbadvertices})}}{\le}
\sum_{x\in B}\,\text{deg}_G(x)
\le \sum_{x\in[n]}\,\text{deg}_G(x)
=2\,|\,Q\,|,$$
where the last equality follows from the well-known fact that the sum of
degrees in an undirected graph is
twice the
number of edges.
This
completes the proof
because
$|\,Q\,|=o(n^2)$ is $A$'s query complexity and $\epsilon$ is a constant.
\end{proof}

Henceforth we will assume $n\in\mathbb{Z}^+$ to be sufficiently large so
that
\begin{eqnarray}
n-|\,B\,|-1-\epsilon n>0
\label{numberofremainingpointstobemadegood}
\end{eqnarray}
by Lemma~\ref{numberofbadvertices}.
For all $x\in[n]$,
\begin{eqnarray}
d\left(x,x\right)\equiv 0.
\label{zerodistancestoself}
\end{eqnarray}
For all $(x,y)\in[n]^2\setminus\{(x,x)\mid x\in[n]\}$ with $x\in B$, $y\in B$ or $(x,y)\in Q$,
\begin{eqnarray}
d(x,y)\equiv
\left\{
\begin{array}{ll}
4, &\text{if }x\in B \text{ or } y\in B;\\
2, &\text{otherwise}.
\end{array}
\right.
\label{distancesonquerysetandbadvertices}
\end{eqnarray}
Clearly, this does not assign different values to $d(x,y)$ and $d(y,x)$.

As
Eq.~(\ref{distancesonquerysetandbadvertices})
specifies
$d$ on a superset of $Q$ (which is the set of $A$'s queries)
and $A$ is deterministic,
the output of $A^d$
has
now
been
fixed
even though
$d$
is not fully specified yet.
Let
$p\in[n]\setminus B$ and $p^\prime\in B$
be
such that $\{p,p^\prime\}$ contains
the output of $A^d$.

\begin{lemma}\label{notbadandnotconnectedwithalgorithmoutput}
$$\left|\,\left([n]\setminus \left(B\cup\{p\}\right)\right)
\cap \left\{x\in[n]\mid (p,x)\notin Q\right\}\,\right|
\ge n-
\left|\,B\cup\{p\}\,\right|-\epsilon n.$$
\end{lemma}
\begin{proof}
Eq.~(\ref{setofbadvertices}) and $p\notin B$ imply $\text{deg}_G(p)<\epsilon
n$, i.e., $|\,\{x\in[n]\mid(p,x)\in Q\}\,|<\epsilon n$.
\end{proof}

\comment{
\begin{eqnarray}
\hat{z}\equiv\left\{
\begin{array}[ll]
\mathop{argmin}_{y\in([n]\setminus B)\cap\{x\in[n]\mid (x,z)\notin Q\}}\,
\text{deg}_G(y),& \text{if }z\notin B;\\
\mathop{argmin}_{y\in[n]\setminus B}\, \text{deg}_G(y),& \text{otherwise}.
\end{array}
\right.
\label{nearoptimalpoint}
\end{eqnarray}
}

Take
\begin{eqnarray}
\hat{p}\in \left([n]\setminus \left(B\cup\{p\}\right)\right)
\cap\left\{x\in[n]\mid
\left(x,p\right)\notin Q\right\}
\label{pickingourpoint}
\end{eqnarray}
arbitrarily,
as
can be done
by
Lemma~\ref{notbadandnotconnectedwithalgorithmoutput}
and Eq.~(\ref{numberofremainingpointstobemadegood}).
Trivially, $\hat{p}\notin B$.

We now complete the specification of $d$.
For all $(x,y)\in [n]^2\setminus (Q\cup \{(x,x)\mid x\in[n]\})$ with $x\notin B$ and $y\notin
B$,\footnote{These are precisely the pairs whose $d$-distances are not
specified by
Eqs.~(\ref{zerodistancestoself})--(\ref{distancesonquerysetandbadvertices}).}
\begin{eqnarray}
d(x,y)\equiv\left\{
\begin{array}{ll}
3,& \text{if }
((x=\hat{p})\land (y= p))\text{ or }((y=\hat{p})\land (x=p));\\
1, & \text{if }((x=\hat{p})\land (y\neq p))\text{ or }((y=\hat{p})\land (x\neq
p));\\
4, &\text{if }((x=p)\land(y\neq \hat{p}))\text{ or }((y=p)\land(x\neq\hat{p}));\\
2, &\text{otherwise}.
\end{array}
\right.
\label{makingourpointgoodandalgorithmpointbad}
\end{eqnarray}
The four cases
in Eq.~(\ref{makingourpointgoodandalgorithmpointbad})
are mutually exclusive because $p\neq \hat{p}$ by
Eq.~(\ref{pickingourpoint}).
Clearly, Eq.~(\ref{makingourpointgoodandalgorithmpointbad}) does not assign
different values to $d(x,y)$ and $d(y,x)$.

The following lemma is straightforward from
Eqs.~(\ref{zerodistancestoself})--(\ref{distancesonquerysetandbadvertices})~and~(\ref{makingourpointgoodandalgorithmpointbad}).

\begin{lemma}\label{rangeofdistances}
For all $x$, $y\in[n]$, $d(x,y)\in\{0,1,2,3,4\}$.
\end{lemma}


\begin{lemma}\label{speciallydesignedpointisgood}
$$\sum_{y\in[n]}\,d\left(\hat{p},y\right)\le \left(1+4\epsilon\right)n+o(n).$$
\end{lemma}
\begin{proof}
By Lemmas~\ref{numberofbadvertices}~and~\ref{rangeofdistances},
\begin{eqnarray}
\sum_{y\in B}\,d\left(\hat{p},y\right)=o(n).
\label{distancesfromourpointtobad}
\end{eqnarray}
Furthermore,
\begin{eqnarray}
\sum_{y\in[n]\text{ s.t.\ }(\hat{p},y)\in Q}\,d\left(\hat{p},y\right)
\stackrel{\text{Lemma~\ref{rangeofdistances}}}{\le} \sum_{y\in[n]\text{ s.t.\ }(\hat{p},y)\in Q}\,4
= 4\,\text{deg}_G\left(\hat{p}\right)<4\epsilon n,
\label{distancesfromoutpointtoqueried}
\end{eqnarray}
where the last inequality follows from Eq.~(\ref{setofbadvertices})
and $\hat{p}\notin B$.
We have
\begin{eqnarray}
\sum_{y\in[n]\setminus(B\cup\{p,\hat{p}\})\text{ s.t.\ }
(\hat{p},y)\notin Q}\,d\left(\hat{p},y\right)
\le n
\label{distancesfromourpointtononbadnonqueried}
\end{eqnarray}
because all summands are $1$ by
Eq.~(\ref{makingourpointgoodandalgorithmpointbad}) and $\hat{p}\notin B$.
By Lemma~\ref{rangeofdistances},
\begin{eqnarray}
\sum_{y\in\{p,\hat{p}\}}\,d\left(\hat{p},y\right)=O(1).
\label{thetrivialdistance}
\end{eqnarray}
Summing up
Eqs.~(\ref{distancesfromourpointtobad})--(\ref{thetrivialdistance})
completes the proof.
\end{proof}

\begin{lemma}\label{outputpointisterribleifnotbad}
$$\sum_{y\in[n]}\,d\left(p,y\right)\ge 4\left(n-o(n)-\epsilon n\right).$$
\end{lemma}
\begin{proof}
Recall that $p\notin B$.
We have
\begin{eqnarray*}
&&\sum_{y\in[n]}\,d\left(p,y\right)\\
&\ge& \sum_{y\in[n]\setminus (B\cup\{p,\hat{p}\})
\text{ s.t.\ }(p,y)\notin Q}\,d\left(p,y\right)\\
&\stackrel{\text{Eq.~(\ref{makingourpointgoodandalgorithmpointbad})}}{=}&
\sum_{y\in[n]\setminus (B\cup\{p,\hat{p}\})
\text{ s.t.\ }(p,y)\notin Q}\,4\\
&\ge&
4\left(
\left|\,
\left\{y\in[n]\setminus \left(B\cup\{p\}\right)\mid \left(p,y\right)
\notin Q\right\}
\,\right|-1\right)\\
&\stackrel{\text{Lemma~\ref{notbadandnotconnectedwithalgorithmoutput}}}{\ge}&
4\left(n-|\,B\,|-\epsilon n-2\right)\\
&\stackrel{\text{Lemma~\ref{numberofbadvertices}}}{=}&
4\left(n-o(n)-\epsilon n\right).
\end{eqnarray*}
\end{proof}

The next lemma is immediate from Eq.~(\ref{distancesonquerysetandbadvertices})
and $p^\prime\in B$.

\begin{lemma}\label{outputpointisterribleifbad}
$$\sum_{y\in[n]\setminus\{p^\prime\}}\,
d\left(p^\prime,y\right)= 4\left(n-1\right).$$
\end{lemma}

We proceed to prove that $([n],d)$ is a metric space through a few lemmas.




The following
lemma is
immediate from
Eqs.~(\ref{distancesonquerysetandbadvertices})~and~(\ref{makingourpointgoodandalgorithmpointbad}).

\begin{lemma}\label{onlybadandalgorithmoutputcanhavedistance4}
For all $x,$ $y\in[n]$, if $d(x,y)=4$, then $\{x,y\}\cap (B\cup
\{p\})\neq\emptyset$.
\end{lemma}


Below is a consequence of $p\notin B$ and
Eqs.~(\ref{pickingourpoint})--(\ref{makingourpointgoodandalgorithmpointbad}).

\begin{lemma}\label{distancebetweennonbadoutputandourdesignedpoint}
$d(\hat{p},p)=3$.
\end{lemma}

\begin{lemma}\label{constructeddistanceismetric}
$([n],d)$ is a metric space.
\end{lemma}
\begin{proof}
We only need to prove the triangle inequality for $d$ because all the other
axioms
are easy to verify.
Consider
the following cases for all distinct $x$, $y$, $z\in[n]$:
\begin{itemize}
\item $d(x,y)=1$, $d(x,z)=1$ and $d(y,z)=4$.
By Lemma~\ref{sourceofshortdistances},
$x=\hat{p}$.
Hence if $y=p$ (resp., $z=p$), then $d(x,y)=3$ (resp., $d(x,z)=3$)
by
Lemma~\ref{distancebetweennonbadoutputandourdesignedpoint},
a
contradiction.
Therefore, $p\notin\{y,z\}$, which together with
Lemma~\ref{onlybadandalgorithmoutputcanhavedistance4} forces $\{y,z\}\cap
B\neq\emptyset$.
But
if
$y\in B$ (resp., $z\in B$), then
$d(x,y)=4$ (resp., $d(x,z)=4$)
by Eq.~(\ref{distancesonquerysetandbadvertices}), a contradiction.
\item $d(x,y)=1$, $d(x,z)=1$ and $d(y,z)=3$.
By Lemma~\ref{sourceofshortdistances},
$x=\hat{p}$.
On the other hand, $d(y,z)=3$ means $(y,z)\in\{(\hat{p},p),(p,\hat{p})\}$ by
Eq.~(\ref{makingourpointgoodandalgorithmpointbad}) (which is the only
equation that may set distances to $3$),
contradicting $x=\hat{p}$.
\item $d(x,y)=1$, $d(x,z)=2$ and $d(y,z)=4$.
By Lemma~\ref{onlybadandalgorithmoutputcanhavedistance4},
$\{y,z\}\cap(B\cup\{p\})\neq\emptyset$.
But if $y\in B$ (resp., $z\in B$), then $d(x,y)=4$ (resp., $d(x,z)=4$) by
Eq.~(\ref{distancesonquerysetandbadvertices}), a contradiction.
Therefore, $p\in\{y,z\}$.
Furthermore, $\hat{p}\in\{x,y\}$ by Lemma~\ref{onlysourceofdistance1}.
Consequently, $(p,\hat{p})\in\{(x,y),(x,z),(y,z)\}$
(note that $p\neq \hat{p}$ by Eq.~(\ref{pickingourpoint})),
implying $3\in\{d(x,y),d(x,z),d(y,z)\}$
by
Lemma~\ref{distancebetweennonbadoutputandourdesignedpoint},
a contradiction.
\end{itemize}
We have excluded all possibilities of $d(x,y)+d(x,z)<d(y,z)$, where
$x$, $y$, $z\in[n]$.
\end{proof}

Combining
Lemmas~\ref{speciallydesignedpointisgood}--\ref{outputpointisterribleifbad},~\ref{constructeddistanceismetric}
and that
$\{p,p^\prime\}$ contains the output of $A^d$
yields our main theorem.

\begin{theorem}\label{maintheorem}
{\sc Metric $1$-median} has no deterministic nonadaptive $o(n^2)$-query
$(4-\epsilon)$-approximation
algorithms
for any constant $\epsilon>0$.
\end{theorem}



Theorem~\ref{maintheorem}
shows that
the approximation ratio of $4$ in
Theorem~\ref{nonadaptiveupperbound} cannot be improved to any constant
$c<4$.

\comment{
For all $y\in[n]\setminus B$ with $(\hat{z},y)\notin Q$,
\begin{eqnarray}
d\left(\hat{z},y\right)\equiv\left\{
\begin{array}[ll]
3,& \text{if }y=z\text{ and } z\notin B;\\
1, &\text{otherwise}.
\end{array}
\right.
\label{distancesofnearoptimalpoint}
\end{eqnarray}
Clearly, this
and Eq.~(\ref{distancesonquerysetandbadvertices}) uniquely determines
$d(\hat{z},y)$ for all $y\in [n]$.

To complete specifying $d$,
for all $(x,y)\in [n]^2\setminus Q$ with $x\notin B$, $y\notin B$
\begin{eqnarray}
d(x,y)\equiv\left\{
\begin{array}[ll]
4,& \text{if }x=z \text{ or }y=z;\\
2,& \text{otherwise}.
\end{array}
\right.
\end{eqnarray}
}



For a metric space $([n],d)$ and a set $Q\subseteq [n]^2$
of unordered pairs,
let $G_Q=([n],Q)$ be the
undirected graph with
vertex set $[n]$ and
edge set $Q$.
Assign to each edge
$(x,y)$ of $G_Q$
the
length $d(x,y)$.
For $x$, $y\in[n]$, denote by $d_Q(x,y)$ the
shortest-path distance between $x$ and $y$ in $G_Q$.
Clearly, $d_Q(x,y)\ge d(x,y)$ for all $x$, $y\in[n]$.
For a finite set $D$ and a function $f\colon D\to \mathbb{R}$,
the $\ell_1$ norm of
$f$
is $\lVert f\rVert_1=\sum_{x\in D}\,|\,f(x)\,|$.
The following corollary investigates, with respect to the normalized
$\ell_1$ norm, the
inapproximability of
metrics
by small sets of distances.

\begin{corollary}
There do not exist sets $Q\subseteq[n]^2$ of unordered pairs
satisfying
\begin{eqnarray}
|\,Q\,|&=&o\left(n^2\right),\label{numberofqueries}\\
\frac{\lVert d_Q-d\rVert_1}{\lVert d\rVert_1}&\le&
1-\Omega(1)\label{recoveryerror}
\end{eqnarray}
for all metric spaces $([n],d)$.
\end{corollary}
\begin{proof}
Suppose for contradiction that $Q\subseteq[n]^2$ satisfies
Eqs.~(\ref{numberofqueries})--(\ref{recoveryerror}).
Let
\begin{eqnarray}
\tilde{z}&=&\mathop{\rm argmin}_{z\in[n]}\,
\sum_{x\in[n]}\,d_Q\left(z,x\right),
\label{optimalsolutiononpseudodistances}\\
z^*&=&\mathop{\rm argmin}_{z\in[n]}\,\sum_{x\in[n]}\,d\left(z,x\right).
\nonumber
\end{eqnarray}
So $z^*$ is
the optimal solution to {\sc Metric $1$-median} with respect to $([n],d)$.
Now,
\begin{eqnarray}
&&\sum_{x\in[n]}\,d\left(\tilde{z},x\right)\label{qualityofpseudo1median}\\
&\le& \sum_{x\in[n]}\,d_Q\left(\tilde{z},x\right)\nonumber\\
&\stackrel{\text{Eq.~(\ref{optimalsolutiononpseudodistances})}}{\le}&
\frac{1}{n}\sum_{z\in[n]}\,\sum_{x\in[n]}\,d_Q\left(z,x\right)\nonumber\\
&=&\frac{1}{n}\left\lVert d_Q\right\rVert_1\nonumber\\
&\stackrel{\text{Eq.~(\ref{recoveryerror})}}{\le}&
\frac{2-\Omega(1)}{n}
\left\lVert d\right\rVert_1\nonumber\\
&=&\frac{2-\Omega(1)}{n}\sum_{x,y\in[n]}\,d\left(x,y\right)\nonumber\\
&\le&\frac{2-\Omega(1)}{n}\sum_{x,y\in[n]}\,\left(d\left(z^*,x\right)+d\left(z^*,y\right)\right)
\nonumber\\
&=&\left(2-\Omega(1)\right)\cdot
2\sum_{x\in[n]}\,d\left(z^*,x\right).\label{optimalvalue4times}
\end{eqnarray}
By Eq.~(\ref{optimalsolutiononpseudodistances}),
we may find $\tilde{z}$ with $|\,Q\,|=o(n^2)$ queries,
which together with
Eqs.~(\ref{qualityofpseudo1median})--(\ref{optimalvalue4times})
contradict Theorem~\ref{maintheorem}.
\end{proof}

\comment{
\section{Additional section --- to be modified}

This section modifies XXX slightly to XXX.

\begin{figure}
\begin{algorithmic}[1]
\FOR{each $(q,r)\in S$}
	\FOR{each $(q^\prime,r^\prime)\in S$}
		\IF{$q$, $q^\prime \le \lfloor(n-1)/m\rfloor-1$}
			\STATE Query for $d(q m+r,q^\prime m+r)$;
			\STATE Query for $d(q^\prime m+r,q^\prime m+r^\prime)$;
			\STATE $\tilde{d}(q m+r,q^\prime m+r^\prime)\leftarrow d(q m+r,q^\prime m+r)+d(q^\prime m+r,q^\prime m+r^\prime)$;
		\ELSE
			\STATE Query for $d(q m+r,q^\prime m+r^\prime)$;
			\STATE $\tilde{d}(q m+r,q^\prime m+r^\prime)\leftarrow d(q m+r,q^\prime m+r^\prime)$;
		\ENDIF
	\ENDFOR
\ENDFOR
\STATE $(\hat{q},\hat{r})\leftarrow\mathop{\rm argmin}_{(q,r)\in S}
\sum_{(q^\prime,r^\prime)\in S}\, {\tilde{d}}^2(q m+r,q^\prime m+r^\prime)$,
breaking ties arbitrarily;
\STATE Output $\hat{q} m+\hat{r}$;
\end{algorithmic}
\caption{Algorithm {\sf Approx-Centroid}.}
\label{deterministic16approximationalgorithm}
\end{figure}

For all $(q,r)$, $(q^\prime,r^\prime)\in S$ and $x\in\{0,1,\ldots,n-1\}$,
define
\begin{eqnarray}
f\left(q,r,q^\prime,x\right)
\equiv
\left\{
\begin{array}{ll}
d(x,q^\prime m+r), & \text{if $q$, $q^\prime\le\lfloor(n-1)/m\rfloor-1$;}\\
0, & \text{otherwise.}
\end{array}
\right.
.\label{additionalterm}
\end{eqnarray}
The same definition is made by Chang~\cite{Cha13}.

\begin{fact}[{\cite[Lemma~2]{Cha13}}]\label{pseudodistanceupper}
For all $(q,r)$, $(q^\prime,r^\prime)\in S$ and $x\in\{0,1,\ldots,n-1\}$,
$$
\tilde{d}\left(qm+r,q^\prime m+r^\prime\right)
\le d\left(x,qm+r\right)+d\left(x,q^\prime
m+r^\prime\right)+2f\left(q,r,q^\prime,x\right)
$$
after finishing the loop in lines~1--12 of {\sf Approx-Centroid}.
\end{fact}

\begin{fact}[{\cite[Lemma~4]{Cha13}}]
For all $(q,r)$, $(q^\prime,r^\prime)\in S$,
$$
d\left(qm+r,q^\prime m+r^\prime\right)
\le
\tilde{d}\left(qm+r,q^\prime m+r^\prime\right)
$$
after finishing the loop in lines~1--12 of {\sf Approx-Centroid}.
\end{fact}

In the following two lemmas,
$(q,r)$ and $(q^\prime,r^\prime)$ are
independent and uniformly random elements in $S$.

\begin{lemma}\label{distancessquareexpected}
For all $x\in\{0,1,\ldots,n-1\}$,
\begin{eqnarray*}
\mathop{\rm E}\left[\,{\tilde{d}}^2\left(qm+r,q^\prime m+r^\prime\right)\,\right]
\le 8\cdot\mathop{\rm E}\left[\,d^2\left(x,qm+r\right)\,\right]
+8\cdot\mathop{\rm E}\left[\,f^2\left(q,r,q^\prime,x\right)\,\right].
\end{eqnarray*}
\end{lemma}
\begin{proof}
By Fact~\ref{pseudodistanceupper},
\begin{eqnarray*}
&&\mathop{\rm E}\left[\,{\tilde{d}}^2\left(qm+r,q^\prime
m+r^\prime\right)\,\right]\\
&\le&
\mathop{\rm E}\left[\,\left(
d\left(x,qm+r\right)+d\left(x,q^\prime m+r^\prime\right)
+f\left(q,r,q^\prime,x\right)+f\left(q,r,q^\prime,x\right)
\right)^2\,\right]\\
&\le&
4\cdot \mathop{\rm E}\left[\,
d^2\left(x,qm+r\right)+d^2\left(x,q^\prime m+r^\prime\right)
+f^2\left(q,r,q^\prime,x\right)+f^2\left(q,r,q^\prime,x\right)
\,\right]\\
&=&8\cdot\mathop{\rm E}\left[\,d^2\left(x,qm+r\right)\,\right]
+8\cdot\mathop{\rm E}\left[\,f^2\left(q,r,q^\prime,x\right)\,\right],
\end{eqnarray*}
where the second inequality follows from Cauchy's inequality.
\end{proof}

\begin{lemma}\label{fsquarelemma}
For all $x\in\{0,1,\ldots,n-1\}$,
\begin{eqnarray*}
\mathop{\rm E}\left[\,f^2\left(q,r,q^\prime,x\right)\,\right]
\le \mathop{\rm E}\left[\,d^2\left(x,q^\prime m+r^\prime\right)\,\right].
\end{eqnarray*}
\end{lemma}
\begin{proof}
By Eq.~(\ref{additionalterm}),
{\small 
\begin{eqnarray}
\mathop{\rm E}\left[\,f^2\left(q,r,q^\prime,x\right)\,\right]
=
\Pr\left[\,q,q^\prime\le \left\lfloor\frac{n-1}{m}\right\rfloor-1\,\right]
\cdot
\mathop{\rm E}\left[\,d^2\left(x,q^\prime m+r\right)\mid q,q^\prime\le
\left\lfloor\frac{n-1}{m}\right\rfloor-1\,\right].
\label{fsquare}
\end{eqnarray}
}
Observe that,
conditional on any realization
of $q$ and $q^\prime$ with $q$, $q^\prime\in \lfloor(n-1)/m\rfloor-1$,
both $r$ and $r^\prime$ are uniformly distributed over $\{0,1,\ldots,m-1\}$.
Therefore,
{\small
\begin{eqnarray*}
\mathop{\rm E}\left[\,d^2\left(x,q^\prime m+r\right)\mid q,q^\prime\le
\left\lfloor\frac{n-1}{m}\right\rfloor-1\,\right]
=
\mathop{\rm E}\left[\,d^2\left(x,q^\prime m+r^\prime\right)\mid q,q^\prime\le
\left\lfloor\frac{n-1}{m}\right\rfloor-1\,\right].
\end{eqnarray*}
}
This and inequality~(\ref{fsquare})
imply
\begin{eqnarray*}
&&\mathop{\rm E}\left[\,f^2\left(q,r,q^\prime,x\right)\,\right]\nonumber\\
&=&
\Pr\left[\,q,q^\prime\le \left\lfloor\frac{n-1}{m}\right\rfloor-1\,\right]
\cdot
\mathop{\rm E}\left[\,d^2\left(x,q^\prime m+r^\prime\right)\mid q,q^\prime\le
\left\lfloor\frac{n-1}{m}\right\rfloor-1\,\right]\\
&\le& \mathop{\rm E}\left[\,d^2\left(x,q^\prime m+r^\prime\right)\,\right].
\end{eqnarray*}
\end{proof}

Below is a consequence of
Lemmas~\ref{distancessquareexpected}--\ref{fsquarelemma}.

\begin{lemma}\label{therecomestheratioof16}
For all $x\in\{0,1,\ldots,n-1\}$,
\begin{eqnarray*}
\mathop{\rm E}\left[\,{\tilde{d}}^2\left(qm+r,q^\prime m+r^\prime\right)\,\right]
\le 16\cdot \mathop{\rm E}\left[\,d^2\left(x,qm+r\right)\,\right].
\end{eqnarray*}
\end{lemma}

\begin{theorem}
{\sc Metric $1$-median} has a deterministic $O(n^{3/2})$-query
$16$-approximation algorithm.
\end{theorem}
\begin{proof}
By line~13 of {\sf Approx-Centroid},
\begin{eqnarray*}
\sum_{(q^\prime,r^\prime)\in
S}\,{\tilde{d}}^2\left(\hat{q}m+\hat{r},q^\prime m+r^\prime\right)
\le \frac{1}{n}\cdot \sum_{(q,r)\in S}\,\sum_{(q^\prime,r^\prime)\in S}\,
{\tilde{d}}^2\left(qm+r,q^\prime m+r^\prime\right).
\end{eqnarray*}
\comment{
Equivalently,
\begin{eqnarray*}
\mathop{\rm E}\left[\,{\tilde{d}}^2\left(\hat{q}m+\hat{r},q^\prime
m+r^\prime\right)\,\right]
\le
\mathop{\rm E}\left[\,{\tilde{d}}^2\left(qm+r,q^\prime
m+r^\prime\right)\,\right].
\end{eqnarray*}
}
By Lemma~\ref{therecomestheratioof16},
\begin{eqnarray*}
\mathop{\rm E}\left[\,{\tilde{d}}^2\left(qm+r,q^\prime m+r^\prime\right)\,\right]
\le \min_{x=0}^{n-1}\,
\end{eqnarray*}
\end{proof}

\section{Something new}

For
a metric space
$([n],d)$,
uniformly random points
$\bs{u}$, $\bs{v}$
in $[n]$ and
the output $z$ of
Indyk's algorithm given oracle access to $([n],d)$,
\begin{eqnarray*}
&&\mathop{\rm
E}\left[\,\left|\,\frac{1}{2}\left(d\left(\bs{u},z\right)+d\left(\bs{v},z\right)\right)-d\left(\bs{u},\bs{v}\right)\,\right|\,\right]\\
&\le&
\mathop{\rm
E}\left[\,\left|\,\frac{1}{2}d\left(\bs{u},z\right)-\frac{1}{2}d\left(\bs{u},\bs{v}\right)\,\right|\,\right]
+\mathop{\rm
E}\left[\,\left|\,\frac{1}{2}d\left(\bs{v},z\right)-\frac{1}{2}d\left(\bs{u},\bs{v}\right)\,\right|\,\right]\\
&=&
\mathop{\rm
E}\left[\,\left|\,d\left(\bs{u},z\right)-d\left(\bs{u},\bs{v}\right)\,\right|\,\right]\\
&\le&\mathop{\rm E}\left[\,d\left(z,\bs{v}\right)\,\right].
\end{eqnarray*}
That is,
writing
$\tilde{d}(x,y)\equiv (d(x,z)+d(y,z))/2$ for all $x$, $y\in[n]$,
$$\left\|\,\tilde{d}-d\,\right\|_1
\le \mathop{\rm E}\left[\,d\left(z,\bs{v}\right)\,\right].$$
This and the easily verifiable fact that $\tilde{d}$ is a metric on $[n]$
show how to recover $d$ in $O(n)$ time with a bounded $\ell_1$ error.
}
}

\bibliographystyle{plain}
\bibliography{adaptivemedian}

\begin{thebibliography}{10}

\bibitem{AGKMMP04}
V.~Arya, N.~Garg, R.~Khandekar, A.~Meyerson, K.~Munagala, and V.~Pandit.
\newblock Local search heuristics for $k$-median and facility location
  problems.
\newblock {\em SIAM Journal on Computing}, 33(3):544--562, 2004.

\bibitem{Cha12}
C.-L. Chang.
\newblock Some results on approximate $1$-median selection in metric spaces.
\newblock {\em Theoretical Computer Science}, 426:1--12, 2012.

\bibitem{Cha13}
C.-L. Chang.
\newblock Deterministic sublinear-time approximations for metric $1$-median
  selection.
\newblock {\em Information Processing Letters}, 113(8):288--292, 2013.

\bibitem{Che09}
K.~Chen.
\newblock On coresets for $k$-median and $k$-means clustering in metric and
  \uppercase{E}uclidean spaces and their applications.
\newblock {\em SIAM Journal on Computing}, 39(3):923--947, 2009.

\bibitem{EW04}
D.~Eppstein and J.~Wang.
\newblock Fast approximation of centrality.
\newblock {\em Journal of Graph Algorithms and Applications}, 8(1):39--45,
  2004.

\bibitem{GR08}
O.~Goldreich and D.~Ron.
\newblock Approximating average parameters of graphs.
\newblock {\em Random Structures \& Algorithms}, 32(4):473--493, 2008.

\bibitem{GMMMO03}
S.~Guha, A.~Meyerson, N.~Mishra, R.~Motwani, and L.~O'Callaghan.
\newblock Clustering data streams: \uppercase{T}heory and practice.
\newblock {\em IEEE Transactions on Knowledge and Data Engineering},
  15(3):515--528, 2003.

\bibitem{Ind99}
P.~Indyk.
\newblock Sublinear time algorithms for metric space problems.
\newblock In {\em Proceedings of the 31st Annual ACM Symposium on Theory of
  Computing}, pages 428--434, 1999.

\bibitem{Ind00}
P.~Indyk.
\newblock {\em High-dimensional computational geometry}.
\newblock PhD thesis, Stanford University, 2000.

\bibitem{JKS12}
R.~Jaiswal, A.~Kumar, and S.~Sen.
\newblock A simple $\uppercase{D}^2$-sampling based \uppercase{PTAS} for
  $k$-means and other clustering problems.
\newblock In {\em Proceedings of the 18th Annual International Conference on
  Computing and Combinatorics}, pages 13--24, 2012.

\bibitem{KSS10}
A.~Kumar, Y.~Sabharwal, and S.~Sen.
\newblock Linear-time approximation schemes for clustering problems in any
  dimensions.
\newblock {\em Journal of the ACM}, 57(2):5, 2010.

\bibitem{MP04}
R.~R. Mettu and C.~G. Plaxton.
\newblock Optimal time bounds for approximate clustering.
\newblock {\em Machine Learning}, 56(1--3):35--60, 2004.

\bibitem{WF94}
S.~Wasserman and K.~Faust.
\newblock {\em Social Network Analysis: Methods and Applications}.
\newblock Cambridge University Press, 1994.

\bibitem{Wu14}
B.-Y. Wu.
\newblock On approximating metric $1$-median in sublinear time.
\newblock {\em Information Processing Letters}, 114(4):163--166, 2014.

\end{thebibliography}

\noindent

\end{document}